\documentclass[journal]{IEEEtran}

\usepackage{amssymb}
\usepackage{amsmath}
\usepackage{cite}
\usepackage{url}
\usepackage{empheq}
\usepackage{xcolor}
\usepackage{graphicx}
\usepackage{subfigure}
\usepackage{enumitem}
\usepackage{fancyhdr}
\usepackage{mdwmath}	
\usepackage{mdwtab}
\usepackage{caption}
\usepackage{amsthm}
\usepackage{algorithm}
\usepackage{algorithmic}

\newtheorem{lemma}{Lemma}
\newtheorem{remark}{Remark}
\newtheorem{theorem}{Theorem}
\newtheorem{corollary}{Corollary}
\newtheorem{assumption}{Assumption}
\newtheorem{definition}{Definition}
\newtheorem{proposition}{Proposition}

\newcommand{\eqr}[1]{(\ref{#1})}
\newcommand{\fref}[1]{Fig.~\ref{#1}}

\newcommand*{\QEDA}{\null\nobreak\hfill\ensuremath{\blacksquare}}

\begin{document}
\title{Age-of-Information Aware Federated Learning with Finite Speed Pinching Antenna}
\author{Kaidi~Wang,~\IEEEmembership{Member,~IEEE,}
Daniel~K.~C.~So,~\IEEEmembership{Senior~Member,~IEEE,}
and~Zhiguo~Ding,~\IEEEmembership{Fellow,~IEEE}
\thanks{Kaidi~Wang and Daniel~K.~C.~So are with the Department of Electrical and Electronic Engineering, the University of Manchester, Manchester, M1 9BB, UK (email: kaidi.wang@ieee.org; d.so@manchester.ac.uk).}
\thanks{Zhiguo~Ding is with the School of Electrical and Electronic Engineering (EEE), Nanyang Technological University, Singapore 639798 (e-mail: zhiguo.ding@ntu.edu.sg).}}
\maketitle
\begin{abstract}
This paper investigates age-of-information (AoI) aware federated learning over wireless networks with finite speed pinching antennas. In contrast to existing studies that assume an infinitely high antenna moving speed, a practical round based training procedure is considered, where the pinching antenna is repositioned during the local training phase and its feasible movement range depends on the selected devices. This creates a new coupling among device selection, antenna placement, local training time, model uploading time, and AoI evolution. To characterize the impact of antenna moving speed, the rate gain over the fixed antenna and the gap to the infinite speed benchmark are analyzed. Subsequently, an overall AoI minimization problem is formulated under a round latency deadline by jointly optimizing the selected device set and the pinching antenna position. A coalitional game based device selection algorithm is proposed, where finite speed antenna placement is incorporated into the coalition utility evaluation. For antenna placement, the optimal search region is derived by exploiting the mobility constraint and device location span, based on which a branch-and-bound (BnB) algorithm is developed to obtain the global optimum. Simulation results show that the proposed scheme can accelerate learning convergence, reduce the sum AoI, and improve device participation compared with baseline schemes, demonstrating the potential of pinching antennas for enhancing federated learning through flexible spatial reconfiguration.
\end{abstract}
\begin{IEEEkeywords}
Federated learning, pinching antenna, finite antenna speed, age-of-information (AoI), device selection, antenna placement.
\end{IEEEkeywords}
\section{Introduction}
Federated learning has emerged as a promising paradigm for enabling privacy-preserving edge intelligence, where distributed devices collaboratively train a shared model without uploading their raw data to a central server \cite{mcmahan2017fl}. In wireless networks, however, the performance of federated learning is significantly affected by device heterogeneity and communication conditions \cite{yang2020ds, chen2020fl}. Due to diverse computation capabilities, local dataset sizes, and channel gains, devices may experience substantially different local training and model uploading delays, resulting in the straggler effect and prolonging the global model aggregation process \cite{gafni2022fl, niknam2020fl}. Moreover, repeatedly selecting devices with favorable computation or communication conditions may lead to unbalanced device participation, stale local updates, and biased global aggregation, especially under non-independent and identically distributed (non-IID) data distributions \cite{lei2023fl, kaidi2025fl2}. Accordingly, reducing communication latency while ensuring balanced and timely device participation remains a fundamental challenge in wireless federated learning.

In this context, pinching antennas provide a promising physical layer solution for mitigating communication induced straggler effects in wireless federated learning. By activating a dielectric particle on a waveguide, a pinching antenna creates a radiating point whose location can be flexibly adjusted along the waveguide \cite{ding2024pin}. Consequently, the wireless propagation environment can be partially reconfigured through antenna positioning, enabling the server to enhance the uplink channels of selected devices and reduce their model uploading delays \cite{yang2025pinching, liu2025pinching}. This capability fundamentally changes the conventional device selection logic in wireless federated learning. Instead of passively adapting device participation to existing channel conditions, the network can actively reshape the uplink geometry to support devices that are more valuable for learning, such as those with stale updates, diverse local data, or high training importance \cite{fang2025pinching, kaidi2025pin}. Therefore, pinching antennas introduce a new spatial design dimension for jointly improving communication efficiency and learning aware device participation in wireless federated learning \cite{xu2025generalized, kaidi2025generalized}.
\subsection{Related Works}
Motivated by these advantages, several recent studies have investigated the integration of pinching antennas into wireless federated learning \cite{lin2025pinching, wu2026straggler, asaad2026energy, asaad2026fedpass, lin2026tail}. In \cite{lin2025pinching}, a single-waveguide pinching-antenna system was studied for both synchronous and asynchronous federated learning, where the active radiating point is adjusted along the waveguide to improve the worst uplink channels. In the synchronous setting, antenna positioning is designed to reduce the round latency determined by the slowest selected device, whereas in the asynchronous setting, it is employed to increase the number of devices that complete model uploading within a given deadline. To further exploit the communication benefits of pinching antennas, \cite{wu2026straggler} proposed a hybrid conventional and pinching antenna network for non-orthogonal multiple access (NOMA) enabled federated learning, where devices are classified according to their data contributions and communication conditions, and the total training time is minimized by jointly optimizing pinching antenna placement and communication resource allocation via reinforcement learning. In \cite{asaad2026energy}, pinching antennas were incorporated into over-the-air federated learning, where a server with multiple pinching antennas on a single waveguide aggregates local model updates through analog over-the-air computation. The corresponding design jointly optimizes pinching antenna positions, device scheduling, and analog transmission scaling to minimize aggregation energy under a target accuracy requirement. From a latency-learning tradeoff perspective, \cite{asaad2026fedpass} developed a low latency digital federated learning framework with multiple pinching antennas and time division multiple access (TDMA), where device scheduling, communication time allocation, transmit power, local computing frequency, and pinching antenna positions are optimized to balance the end-to-end round latency and an upper bound on the learning optimality gap. In a related direction, \cite{lin2026tail} investigated a tail latency aware pinching-antenna assisted federated learning system, where a single pinching antenna reshapes the uplink latency distribution of sampled devices, and the pinching antenna position and device participation probabilities are jointly designed to minimize the expected time-to-accuracy by coupling the maximum order statistic round latency with a heterogeneity aware convergence factor.
\subsection{Motivation and Contributions}
Although these studies have demonstrated the benefits of pinching antennas for wireless federated learning, the antenna positioning process is generally assumed to be instantaneous, such that the optimized antenna position is obtained without explicitly considering the movement time along the waveguide. More broadly, most existing studies on pinching-antenna systems focus on the final antenna location after reconfiguration \cite{xu2025pin, wang2025pa, tegos2025pin, ouyang2025uplink, kaidi2025pin2}, while the impact of finite antenna moving speed remains largely unexplored. In practice, antenna repositioning is not instantaneous, especially when the pinching antenna needs to move over a non-negligible distance along the waveguide. This issue is particularly relevant to federated learning, where the pinching antenna is repositioned sequentially across communication rounds from its previous location. Moreover, since each communication round consists of a local training phase followed by a model uploading phase, the local training phase naturally provides a time window for repositioning the pinching antenna before uplink transmission.

In the round based procedure, the repositioning time is not fixed but depends on the selected devices, whose heterogeneous computational workloads and processing capabilities determine the local training duration and hence the feasible movement range of the pinching antenna. This leads to a distinctive tradeoff, where a device with a long local training time is not necessarily unfavorable, as it may provide a larger antenna movement budget despite increasing the training delay. Therefore, device selection and antenna mobility are mutually coupled in wireless federated learning. That is, the selected devices determine the feasible antenna movement range through the local training duration, while the resulting antenna position determines the uplink channel gains and uploading time. This mobility dependent coupling makes device selection fundamentally different from existing communication or learning oriented designs. To balance round latency and device participation fairness under this coupling, age-of-information (AoI) is adopted as the device selection criterion \cite{kaidi2024fl}, as it characterizes the freshness of device participation over time. The main contributions of this paper can be summarized as follows:
\begin{itemize}[leftmargin=*]
\item A pinching-antenna assisted federated learning framework with finite antenna moving speed is proposed, where the pinching antenna is repositioned during the local training phase and its feasible movement range is determined by the corresponding training duration. Based on the AoI evolution model that captures the effects of device selection and round latency, an overall AoI minimization problem is formulated over device selection and antenna placement under a round latency deadline to ensure timely global aggregation.
\item The impact of finite antenna moving speed on the uplink rate is analyzed under the training time constrained antenna movement model. By comparing the fixed antenna, speed limited pinching-antenna, and unlimited speed pinching-antenna cases, the rate gain and the rate gap are derived with explicit bounds. The resulting saturation moving speed reveals how the antenna moving speed and local training time jointly determine the effectiveness of antenna repositioning in reducing model uploading time.
\item The device selection problem is formulated as a coalitional game. A system level coalition utility is defined to capture both the freshness improvement brought by device participation and the round latency induced by local training and model uploading. A join-and-leave device selection algorithm is then developed to iteratively refine the selected device set, where the antenna placement subject to finite moving speed is incorporated into the utility evaluation to calculate the corresponding uploading time.
\item The antenna placement problem with finite antenna moving speed is studied for a given selected device set. For the single-device case, a closed-form optimal placement solution is derived. For the multi-device case, the search region is tightened by exploiting the feasible movement interval and the horizontal span of the selected devices. The optimal boundary solution is identified when these two intervals are disjoint, while a branch-and-bound (BnB) based algorithm is developed to obtain a globally optimal antenna placement over the reduced non-convex search region.
\end{itemize}
In addition, simulation results are presented to verify the effectiveness of the proposed framework under different learning tasks and system configurations. The results show that pinching antenna placement under finite moving speed can significantly improve learning convergence and AoI performance compared with benchmark schemes. Moreover, the impacts of antenna moving speed, transmit power, and latency deadline on device participation are characterized, further demonstrating the role of antenna repositioning in improving the freshness-latency tradeoff in wireless federated learning.
\section{System Model}
A pinching-antenna assisted federated learning network is considered, where the server is connected to a waveguide equipped with one pinching antenna to serve $N$ single-antenna devices. The set of all devices is denoted by $\mathcal{N}=\{1,2,\dots,N\}$. The service area is modeled as a rectangular region with length $D_x$ and width $D_y$, with its center set as the origin. The waveguide, with length $D_x$, is deployed along the $x$-axis at $y=0$ and at height $d$. The feed point is located at the left end of the waveguide and is given by $\boldsymbol{\psi}_\mathrm{feed}=(-D_x/2,0,d)$.
\subsection{Pinching-Antenna assisted Federated Learning Framework}
In the considered pinching-antenna assisted federated learning network, model training is performed over multiple communication rounds. At the beginning of round $t$, the server determines the selected device set $\mathcal{S}_t$ and broadcasts the current global model $\mathbf{w}^{(t)}$, together with the device selection result, to all devices\footnote{In this work, the optimization focuses on the local training and model uploading phases. The downlink transmission of the global model and the device selection notification is assumed to be completed before local training.}. After receiving this information, each selected device $n\in\mathcal{S}_t$ performs local training on its local dataset and obtains the updated local model as
\begin{equation}
\mathbf{w}_n^{(t)}=\mathbf{w}^{(t)}-\alpha\mathbf{g}_n^{(t)},
\end{equation}
where $\alpha$ is the learning rate, and $\mathbf{g}_n^{(t)}$ is the local gradient or stochastic gradient computed at device $n$. During the local training phase, the pinching antenna is repositioned along the waveguide from $\boldsymbol{\psi}_\mathrm{pin}^{(t-1)}$ to $\boldsymbol{\psi}_\mathrm{pin}^{(t)}$, such that the local training duration serves as the time budget for antenna movement. After local training, the selected devices sequentially upload their updated local models $\mathbf{w}_n^{(t)}$ to the server via TDMA. The server then aggregates the received local models according to FedAvg \cite{mcmahan2017fl}, as follows:
\begin{equation}
\mathbf{w}^{(t+1)}=\frac{\sum_{n\in\mathcal{S}_t}\beta_n\mathbf{w}_n^{(t)}}{\sum_{i\in\mathcal{S}_t}\beta_i},
\end{equation}
where $\beta_n$ is the number of local training samples at device $n$.
\subsection{Local Model Training and Pinching-Antenna Mobility}
The local training time depends on the computational workloads and central processing unit (CPU) capabilities of devices. For device $n$, the local training time is given by
\begin{equation}
T_{\mathrm{tr},n}=\frac{\mu\beta_n}{C_n},
\end{equation}
where $\mu$ is the required number of CPU cycles for training one sample, and $C_n$ is the CPU capability of device $n$. According to the considered round based procedure, the local training time in communication round $t$ is determined by the slowest selected device, i.e.,
\begin{equation}
T_\mathrm{tr}^{(t)}=\max\left\{T_{\mathrm{tr},n}| n\in\mathcal{S}_t\right\}.
\end{equation}

During local training, the pinching antenna is repositioned along the waveguide. Since antenna positioning is performed within the local training period, the movement of the pinching antenna in round $t$ is constrained by
\begin{equation}
\left|x_\mathrm{pin}^{(t)}-x_\mathrm{pin}^{(t-1)}\right|\le v_\mathrm{pin}T_\mathrm{tr}^{(t)},
\end{equation}
where $v_\mathrm{pin}$ is the moving speed of the pinching antenna. Equivalently, the horizontal position of the pinching antenna satisfies
\begin{equation}
x_\mathrm{pin}^{(t)}\in\left[x_\mathrm{pin}^{(t-1)}-v_\mathrm{pin}T_\mathrm{tr}^{(t)}, x_\mathrm{pin}^{(t-1)}+v_\mathrm{pin}T_\mathrm{tr}^{(t)}\right].
\end{equation}
Meanwhile, due to the finite waveguide length, the antenna position also satisfies $x_\mathrm{pin}^{(t)}\in [-D_x/2,D_x/2]$. Without loss of generality, the initial horizontal position of the pinching antenna is set at the center of the waveguide, i.e., $x_\mathrm{pin}^{(0)}=0$. Therefore, the conventional fixed-location antenna case can be viewed as a special case by setting $v_\mathrm{pin}=0$, where the antenna remains fixed over all communication rounds.

The above mobility model reveals a unique coupling between device selection and antenna positioning, as follows.
\begin{remark}
In the proposed model, the feasible movement range of the pinching antenna in communication round $t$ is not fixed in advance but is determined by the local training time $T_\mathrm{tr}^{(t)}$. Since $T_\mathrm{tr}^{(t)}$ depends on the selected device set $\mathcal{S}_t$, device selection affects not only the learning and communication performance but also the feasible antenna position region in the same round.
\end{remark}

\begin{remark}
Given a selected device set $\mathcal{S}_t$, if an additional device $n\notin\mathcal{S}_t$ has a local training time longer than those of all selected devices, the feasible antenna position region can be enlarged accordingly. Therefore, in the proposed model, device participation can also have a positive effect on communication by expanding the antenna movement budget and potentially improving the uplink communication performance.
\end{remark}
\subsection{Pinching-Antenna based Local Model Uploading}
In the pinching-antenna system, the effective channel from the server to each device consists of the in-waveguide propagation path and the free-space propagation path. In communication round $t$, the channel from the server to device $n$ can be modeled as follows:
\begin{equation}
h_n^{(t)}=\frac{\eta e^{-j\left(\frac{2\pi}{\lambda}\left\|\boldsymbol{\psi}_n-\boldsymbol{\psi}_\mathrm{pin}^{(t)}\right\|+\frac{2\pi}{\lambda_g}\left\|\boldsymbol{\psi}_\mathrm{feed}-\boldsymbol{\psi}_\mathrm{pin}^{(t)}\right\|\right)}}{\left\|\boldsymbol{\psi}_n-\boldsymbol{\psi}_\mathrm{pin}^{(t)}\right\|},
\end{equation}
where $\eta=\frac{c}{4\pi f_c}$, $c$ is the speed of light, $f_c$ is the carrier frequency, $\lambda$ is the carrier wavelength, and $\lambda_g=\frac{\lambda}{n_\mathrm{eff}}$ is the wavelength in the dielectric waveguide, with $n_\mathrm{eff}$ denoting the effective refractive index. Moreover, $\boldsymbol{\psi}_n=(x_n,y_n,0)$ is the location of device $n$, $\boldsymbol{\psi}_\mathrm{pin}^{(t)}=(x_\mathrm{pin}^{(t)},0,d)$ is the location of the pinching antenna in communication round $t$, and $\|\cdot\|$ denotes the Euclidean norm.

After local training is completed at all selected devices, the local model uploading phase is initiated\footnote{In this work, the local training period serves as the antenna repositioning interval. Accordingly, model uploading starts after all selected devices complete local training, with all model updates transmitted from a common antenna position determined for the current round. This design utilizes the available training time for repositioning and facilitates the joint design of device selection and antenna placement.}. During this phase, TDMA is adopted, where the selected devices sequentially upload their local models to the server over orthogonal time slots for aggregation. Accordingly, the achievable data rate of device $n$ in communication round $t$ is given by
\begin{equation}
R_n^{(t)}=B\log_2\left(1+\frac{P_n|h_n^{(t)}|^2}{\sigma^2}\right),
\end{equation}
where $B$ is the bandwidth, $P_n$ is the transmit power at device $n$, and $\sigma^2$ is the noise power. For device $n$, the corresponding local model uploading time is
\begin{equation}
T_{\mathrm{up},n}^{(t)}=\frac{D_\mathrm{lm}}{R_n^{(t)}},
\end{equation}
where $D_\mathrm{lm}$ is the data size of the local model. Therefore, the total model uploading time in communication round $t$ is
\begin{equation}
T_\mathrm{up}^{(t)}=\sum_{n\in\mathcal{S}_t}T_{\mathrm{up},n}^{(t)}.
\end{equation}
\subsection{Rate Gain Analysis under Antenna Mobility}
To characterize the impact of pinching antenna moving speed on the uplink rate, a special case with a single selected device $n$ is considered. The pinching antenna starts from the origin and moves toward the optimal antenna position, i.e., the projection of device $n$ onto the waveguide, during local training. The fixed antenna case and the unlimited speed pinching antenna case are considered as two benchmarks, corresponding to $v_\mathrm{pin}=0$ and $v_\mathrm{pin}=\infty$, respectively. For notational simplicity, the communication round index $t$ is omitted in this subsection. In this case, the achievable data rate at antenna position $x_\mathrm{pin}$ can be rewritten as follows:
\begin{equation}
R_n(x_\mathrm{pin})=B\log_2\left(1+\frac{a_n}{(x_\mathrm{pin}-x_n)^2+b_n}\right),
\end{equation}
where $a_n\triangleq P_n\eta^2/\sigma^2$ and $b_n\triangleq y_n^2+d^2$.
The saturation speed is defined as
\begin{equation}
v_{\mathrm{sat},n}=\frac{|x_n|}{T_{\mathrm{tr},n}},
\end{equation}
which represents the minimum antenna moving speed required to reach the projection of device $n$ within its local training time. If $0\le v_\mathrm{pin}<v_{\mathrm{sat},n}$, the residual horizontal distance between the pinching antenna and the projection of device $n$ is given by
\begin{equation}
r_n(v_\mathrm{pin})=|x_n|-v_\mathrm{pin}T_{\mathrm{tr},n}.
\end{equation}
When $v_\mathrm{pin}\ge v_{\mathrm{sat},n}$, the pinching antenna can reach the projection of device $n$ during local training, and thus becomes equivalent to the unlimited speed benchmark.

Based on these definitions, the impact of antenna moving speed on the data rate can be characterized as follows.
\begin{proposition}\label{rategap}
When $0\le v_\mathrm{pin}<v_{\mathrm{sat},n}$, the rate gain from the fixed antenna to the speed limited pinching antenna is
\begin{equation}
\Delta R_{n,0\to v}=B\log_2\!\left(1+\frac{a_n\left(x_n^2-r_n^2(v_\mathrm{pin})\right)}{\left(r_n^2(v_\mathrm{pin})+b_n\right)\left(x_n^2+b_n+a_n\right)}\right),
\end{equation}
with the lower bound
\begin{equation}
\Delta R_{n,0\to v}\ge\frac{B}{\ln 2}\frac{a_n\left(x_n^2-r_n^2(v_\mathrm{pin})\right)}{\left(x_n^2+b_n\right)\left(r_n^2(v_\mathrm{pin})+b_n+a_n\right)}.
\end{equation}
The remaining rate gap from the speed limited pinching antenna to the unlimited speed benchmark is
\begin{equation}
\Delta R_{n,v\to\infty}=B\log_2\!\left(1+\frac{a_n r_n^2(v_\mathrm{pin})}{b_n\left(r_n^2(v_\mathrm{pin})+b_n+a_n\right)}\right),
\end{equation}
with the upper bound
\begin{equation}
\Delta R_{n,v\to\infty}\le\frac{B}{\ln 2}\frac{a_n r_n^2(v_\mathrm{pin})}{b_n\left(r_n^2(v_\mathrm{pin})+b_n+a_n\right)}.
\end{equation}
Moreover, the lower bound of $\Delta R_{n,0\to v}$ is non-decreasing in $v_\mathrm{pin}$, whereas the upper bound of $\Delta R_{n,v\to\infty}$ is non-increasing in $v_\mathrm{pin}$. For $v_\mathrm{pin}\ge v_{\mathrm{sat},n}$, $\Delta R_{n,v\to\infty}=0$.
\end{proposition}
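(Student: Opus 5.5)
We evaluate the rate function $R_n(x_\mathrm{pin})$ at three antenna positions and take differences, using the identity $\log_2(1+A)-\log_2(1+C)=\log_2\!\big(1+\frac{A-C}{1+C}\big)$. The three positions are as follows. The fixed antenna stays at $x_\mathrm{pin}=0$, so its horizontal offset to device $n$ is $|x_n|$. For $0\le v_\mathrm{pin}<v_{\mathrm{sat},n}$, the speed limited antenna moves toward $x_n$ by $v_\mathrm{pin}T_{\mathrm{tr},n}<|x_n|$, so its offset is $r_n(v_\mathrm{pin})$. The unlimited speed antenna sits at $x_n$, with offset $0$. Hence the three SNRs are $a_n/(x_n^2+b_n)$, $a_n/(r_n^2+b_n)$ and $a_n/b_n$.

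\textbf{Closed forms.} For $\Delta R_{n,0\to v}$, set $A=a_n/(r_n^2+b_n)$ and $C=a_n/(x_n^2+b_n)$. Then $1+C=(x_n^2+b_n+a_n)/(x_n^2+b_n)$ and $A-C=a_n(x_n^2-r_n^2)/\big((r_n^2+b_n)(x_n^2+b_n)\big)$. Taking the ratio cancels $x_n^2+b_n$ and gives the stated expression. For $\Delta R_{n,v\to\infty}$, take $A=a_n/b_n$ and $C=a_n/(r_n^2+b_n)$. Then $A-C=a_n r_n^2/\big(b_n(r_n^2+b_n)\big)$ and $1+C=(r_n^2+b_n+a_n)/(r_n^2+b_n)$, which yields the second formula.

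\textbf{Bounds.}
\begin{itemize}
\item The upper bound on the gap follows directly from $\ln(1+z)\le z$.
\item For the lower bound on the gain, use $\ln(1+z)\ge z/(1+z)$ with $z$ the argument of the first formula. We have $1+z=\big[(r_n^2+b_n)(x_n^2+b_n+a_n)+a_n(x_n^2-r_n^2)\big]/\big[(r_n^2+b_n)(x_n^2+b_n+a_n)\big]$. The numerator factors as $(x_n^2+b_n)(r_n^2+b_n+a_n)$: expanding both sides gives $r_n^2x_n^2+r_n^2b_n+r_n^2a_n+b_nx_n^2+b_n^2+a_nb_n+a_nx_n^2-a_nr_n^2$ on the left and $x_n^2r_n^2+x_n^2b_n+x_n^2a_n+b_nr_n^2+b_n^2+a_nb_n$ on the right, which agree.
\item Therefore $z/(1+z)=a_n(x_n^2-r_n^2)/\big((x_n^2+b_n)(r_n^2+b_n+a_n)\big)$, exactly the claimed bound. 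This factorization is the only real computation and is the step I would check most carefully.
\end{itemize}

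\textbf{Monotonicity.} Since $r_n(v_\mathrm{pin})=|x_n|-v_\mathrm{pin}T_{\mathrm{tr},n}$ is nonnegative and decreasing in $v_\mathrm{pin}$ on $[0,v_{\mathrm{sat},n})$, $s=r_n^2$ is non-increasing in $v_\mathrm{pin}$. It therefore suffices to study both bounds as functions of $s\ge0$.
\begin{itemize}
\item The gain bound is proportional to $(x_n^2-s)/(s+b_n+a_n)$. Both factors $x_n^2-s$ and $1/(s+b_n+a_n)$ are nonnegative and non-increasing in $s$, so the bound is non-increasing in $s$, hence non-decreasing in $v_\mathrm{pin}$.
\item The gap bound is proportional to $s/(s+b_n+a_n)=1-(b_n+a_n)/(s+b_n+a_n)$, which is increasing in $s$ and hence non-increasing in $v_\mathrm{pin}$.
\end{itemize}

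\textbf{Saturated case.} If $v_\mathrm{pin}\ge v_{\mathrm{sat},n}$, then $v_\mathrm{pin}T_{\mathrm{tr},n}\ge|x_n|$, so the projection $x_n$ lies in the feasible interval. Note $x_n\in[-D_x/2,D_x/2]$ because the device lies in the service area, so the waveguide-length constraint is not active. The maximizer $x_\mathrm{pin}=x_n$ of $R_n$ is therefore attainable, the speed limited rate equals the unlimited speed rate, and $\Delta R_{n,v\to\infty}=0$.
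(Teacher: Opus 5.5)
Your proof is correct and follows the paper's Appendix~A essentially step for step. You derive the same closed forms by combining the log differences, apply the same inequalities $\ln(1+z)\ge z/(1+z)$ and $\ln(1+z)\le z$, and get monotonicity from $r_n(v_\mathrm{pin})$ being non-increasing in $v_\mathrm{pin}$; the differences are cosmetic, since you verify the factorization $(r_n^2+b_n)(x_n^2+b_n+a_n)+a_n(x_n^2-r_n^2)=(x_n^2+b_n)(r_n^2+b_n+a_n)$ explicitly and argue monotonicity in $s=r_n^2$ without derivatives, whereas the paper differentiates $f(r)$ and $g(r)$.
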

\begin{IEEEproof}
Refer to Appendix~A.
\end{IEEEproof}

This result provides the following insight.
\begin{remark}
Proposition~\ref{rategap} reveals that the rate performance depends on both the antenna moving speed and the local training time. Increasing $v_\mathrm{pin}$ or $T_{\mathrm{tr},n}$ reduces the residual distance $r_n(v_\mathrm{pin})$, thereby increasing the gain over the fixed antenna benchmark and reducing the gap to the unlimited speed benchmark. Once $v_\mathrm{pin}\ge v_{\mathrm{sat},n}$, further increasing the moving speed brings no additional uplink rate improvement.
\end{remark}

\begin{remark}
Proposition~\ref{rategap} shows that the rate improvement enabled by antenna repositioning is also affected by $b_n=y_n^2+d^2$. Specifically, a larger device-to-waveguide distance or waveguide height increases the non-horizontal propagation distance, thereby reducing the relative distance improvement achieved by horizontal antenna movement.
\end{remark}

Since the local model uploading time is inversely proportional to the achievable data rate, the above rate gain enabled by antenna repositioning can be equivalently interpreted as a reduction in model uploading time, which is incorporated into the following system optimization.
\subsection{Round Latency and AoI Modeling}
In federated learning, the staleness of local data updates has a significant impact on the training performance, especially under non-IID data distributions. As indicated in \cite{kaidi2025fl}, AoI \cite{yates2020aoi} can effectively characterize the freshness of local data in federated learning and improve the learning performance by facilitating timely device participation. Moreover, frequent selection of devices with favorable communication conditions may lead to unbalanced device participation and further cause weight divergence \cite{kaidi2025fl2}. Therefore, in this work, AoI is adopted as the optimization criterion to balance device participation while maintaining data freshness.

In communication round $t$, the round latency is determined by the selected devices and is given by
\begin{equation}
T^{(t)}=T_\mathrm{tr}^{(t)}+T_\mathrm{up}^{(t)}=\max\left\{T_{\mathrm{tr},n}| n\in\mathcal{S}_t\right\}+\!\!\sum_{n\in\mathcal{S}_t}\!T_{\mathrm{up},n}^{(t)}.
\end{equation}
Following \cite{kaidi2024fl}, the AoI of device $n$ in communication round $t$ is defined as follows:
\begin{equation}\label{aoi}
A_n^{(t)}=
\begin{cases}
A_n^{(t-1)}+T^{(t)}, & \text{if } n\notin\mathcal{S}_t,\\
0, & \text{if } n\in\mathcal{S}_t.
\end{cases}
\end{equation}
The initial AoI of each device is set to the same value, i.e., $A_n^{(0)}=A_0$, $\forall n\in\mathcal{N}$, where $A_0\ge 0$.

The above AoI expression reveals the coupling among device selection, round latency, and antenna mobility, as summarized in the following remarks.
\begin{remark}
The overall AoI in communication round $t$ consists of the accumulated freshness penalty of non-selected devices and the latency induced AoI increment. Therefore, device selection needs to balance the freshness improvement against the round latency increase caused by the selected device set.
\end{remark}

\begin{remark}
Under TDMA based uploading, selecting an additional device incurs extra uploading time. Nevertheless, this device may still reduce the overall AoI by resetting its own AoI and, if it increases the local training time, by enlarging the feasible antenna movement region for subsequent uploading.
\end{remark}
\section{Problem Formulation}
Based on the above latency and AoI models, in communication round $t$, the selected device set $\mathcal{S}_t$ and the pinching antenna position $x_\mathrm{pin}^{(t)}$ are jointly considered for overall AoI minimization. Moreover, a round latency deadline is incorporated to prevent excessive training and uploading delays in each communication round, thereby ensuring timely global model aggregation. Based on the AoI evolution model in \eqr{aoi}, the overall AoI in round $t$ is given by
\begin{equation}
A^{(t)}=\sum_{n\in\mathcal{N}}\!A_n^{(t)}=\!\!\!\!\sum_{n\in\mathcal{N}\setminus\mathcal{S}_t}\!\!\!\!\!A_n^{(t-1)}+\left(N-|\mathcal{S}_t|\right)\left(T_\mathrm{tr}^{(t)}+T_\mathrm{up}^{(t)}\right).
\end{equation}
Accordingly, the overall AoI minimization problem is formulated as follows:
\begin{subequations}
\begin{empheq}{align}
\min_{\mathcal{S}_t, x_\mathrm{pin}^{(t)}}\quad & A^{(t)}\\
\textrm{s.t.}\quad 
& x_\mathrm{pin}^{(t)} \in \left[-\frac{D_x}{2}, \frac{D_x}{2}\right],\\
& \left|x_\mathrm{pin}^{(t)}-x_\mathrm{pin}^{(t-1)}\right|
\le v_\mathrm{pin}T_\mathrm{tr}^{(t)},\\
& \mathcal{S}_t \subseteq \mathcal{N},\\
& |\mathcal{S}_t| \ge 1,\\
& T^{(t)} \le T_\mathrm{max}.
\end{empheq}
\label{problem}
\end{subequations}\vspace{-2mm}\\
In problem~\eqr{problem}, constraints (\ref{problem}a) and (\ref{problem}b) specify the feasible position of the pinching antenna. Constraints (\ref{problem}c) and (\ref{problem}d) define the feasible device selection set and ensure that at least one device is selected for model update. Constraint (\ref{problem}e) sets the round latency deadline, with $T_\mathrm{max}$ denoting the latency threshold. Problem~\eqr{problem} is a mixed-integer non-convex problem due to the coupling between device selection and antenna positioning, where the selected device set affects both the round latency and the feasible antenna movement range. Since joint optimization involves combinatorial device selection and continuous non-convex antenna placement, problem~\eqr{problem} is decomposed into device selection and antenna placement problems. 

For device selection, the selected device set is determined with its associated antenna placement evaluated for latency calculation, as follows:
\begin{subequations}
\begin{empheq}{align}
\min_{\mathcal{S}_t}\quad & A^{(t)}\\
\textrm{s.t.}\quad & \text{(\ref{problem}d)}, \text{(\ref{problem}e)}, \text{and}\,\text{(\ref{problem}f)}.\nonumber
\end{empheq}
\label{dsproblem}
\end{subequations}\vspace{-2mm}
For a given selected device set $\mathcal{S}_t$, the freshness term, the number of non-selected devices, and the local training time are fixed with respect to $x_\mathrm{pin}^{(t)}$. Therefore, minimizing the overall AoI is equivalent minimizing the model uploading time, and the antenna placement problem is formulated as
\begin{subequations}
\begin{empheq}{align}
\min_{x_\mathrm{pin}^{(t)}}\quad & T_\mathrm{up}^{(t)}\\
\textrm{s.t.}\quad & \text{(\ref{problem}b)}, \text{and}\,\text{(\ref{problem}c)}.\nonumber
\end{empheq}
\label{pinproblem}
\end{subequations}\vspace{-2mm}\\
It should be noted that the latency deadline is included in the device selection problem. For a given device selection result $\mathcal{S}_t$, the local training time and the remaining latency budget for uploading are fixed. Since the antenna placement problem directly minimizes the uploading time, retaining the latency constraint does not affect the optimal antenna placement.
\section{Coalitional Game based Device Selection}
In each communication round, the selected devices naturally form a coalition for local training and model uploading. Since the contribution of each device depends on the entire selected set through the AoI reduction, latency, and feasible antenna movement range, a coalitional game framework is adopted to characterize the set dependent utility.
\subsection{Coalitional Game Formulation}
For the device selection problem~\eqr{dsproblem}, a coalition formation game $(\mathcal{N},v)$ is constructed, where the devices in $\mathcal{N}$ are the players and the selected device set $\mathcal{S}_t$ is regarded as a coalition. Due to the set dependent contribution of each device, individual payoffs are difficult to define, and thus a coalition utility is adopted to directly evaluate each candidate coalition. Since the objective of the considered device selection problem is to minimize the overall AoI, the coalition utility is defined as the negative system cost, i.e.,
\begin{equation}\label{coalitionutility}
v(\mathcal{S}_t)=
\begin{cases}
\tilde{v}(\mathcal{S}_t), & \text{if } \mathcal{S}_t\neq\emptyset \text{ and } T^{(t)}(\mathcal{S}_t)\le T_\mathrm{max}, \\
-\infty, & \text{otherwise},
\end{cases}
\end{equation}
where
\begin{equation}
\tilde{v}(\mathcal{S}_t)=-\!\!\!\!\sum_{n\in\mathcal{N}\setminus\mathcal{S}_t}\!\!\!\!A_n^{(t-1)}-(N-|\mathcal{S}_t|)T^{(t)}(\mathcal{S}_t),
\end{equation}
and $T^{(t)}(\mathcal{S}_t)$ denotes the round latency associated with coalition $\mathcal{S}_t$. Therefore, maximizing the coalition utility is equivalent to minimizing the overall AoI. The value $-\infty$ is assigned to coalitions that violate constraint (\ref{problem}e) or (\ref{problem}f), corresponding to an empty coalition or a coalition exceeding the latency deadline, respectively. In this way, infeasible coalitions can be excluded from the coalition formation process.
\subsection{Preference Relation and Update Rules}
Given the coalition utility defined above, each device determines its preference according to whether joining or leaving the coalition improves the coalition utility. Specifically, a join or leave operation is performed only if it leads to a strictly larger utility. Based on this preference relation, the coalition is updated through the following rules.
\begin{definition}[Join Rule]
Given the current coalition $\mathcal{S}_t$, any device $n\notin\mathcal{S}_t$ joins the coalition if and only if
\begin{equation}
v(\mathcal{S}_t\cup\{n\})>v(\mathcal{S}_t).
\end{equation}
\end{definition}
\begin{definition}[Leave Rule]
Given the current coalition $\mathcal{S}_t$ with $|\mathcal{S}_t|>1$, any device $n\in\mathcal{S}_t$ leaves the coalition if and only if
\begin{equation}
v(\mathcal{S}_t\setminus\{n\})>v(\mathcal{S}_t).
\end{equation}
\end{definition}

Therefore, both the join and leave operations are performed only when they strictly improve the coalition utility. To characterize the coalition evolution, let $\mathcal{S}_t^{(i)}$ denote the coalition obtained after the $i$-th update in communication round $t$, with $\mathcal{S}_t^{(0)}$ being the initial coalition. By iteratively performing such coalition updates, the selected device set is refined until no device can further improve the coalition utility by joining or leaving the coalition. At this point, the resulting coalition can be described in terms of Nash stability \cite{han2012game}. Specifically, the non-selected devices are regarded as an auxiliary coalition $\mathcal{N}\setminus\mathcal{S}_t$, such that the coalition structure can be represented by the two-coalition partition $\{\mathcal{S}_t,\mathcal{N}\setminus\mathcal{S}_t\}$. Based on this induced coalition structure, Nash stability is defined as follows.
\begin{definition}\label{nash}
A coalition structure $\{\mathcal{S}_t,\mathcal{N}\setminus\mathcal{S}_t\}$ is Nash-stable if no device can improve the coalition utility by unilaterally moving between the two coalitions, i.e.,
\begin{equation}
v(\mathcal{S}_t\cup\{n\})\le v(\mathcal{S}_t),\quad \forall n\notin\mathcal{S}_t,
\end{equation}
and
\begin{equation}
v(\mathcal{S}_t\setminus\{n\})\le v(\mathcal{S}_t),\quad \forall n\in\mathcal{S}_t,\ |\mathcal{S}_t|>1.
\end{equation}
\end{definition}

\begin{algorithm}[t]
\caption{Coalitional Game based Device Selection}
\label{coalitionalg}
\begin{algorithmic}[1]
\STATE \textbf{Input:} Device set $\mathcal{N}$ and system parameters.
\STATE \textbf{Initialization:} Set $i=0$ and $\mathcal{S}_t^{(0)}=\emptyset$.
\REPEAT
\STATE Set $\mathcal{S}_t^{(i+1)}=\mathcal{S}_t^{(i)}$.
\STATE Set $\mathcal{L}_\mathrm{join}$ as $\mathcal{N}\setminus\mathcal{S}_t^{(i+1)}$ sorted by descending $A_n^{(t-1)}$.
\FOR{$n\in\mathcal{L}_\mathrm{join}$}
\IF{$v(\mathcal{S}_t^{(i+1)}\cup\{n\})\!>\!v(\mathcal{S}_t^{(i+1)})$}
\STATE Update $\mathcal{S}_t^{(i+1)}=\mathcal{S}_t^{(i+1)}\cup\{n\}$.
\ENDIF
\ENDFOR
\STATE Set $\mathcal{L}_\mathrm{leave}$ as $\mathcal{S}_t^{(i+1)}$ sorted by ascending $A_n^{(t-1)}$.
\FOR{$n\in\mathcal{L}_\mathrm{leave}$}
\IF{$|\mathcal{S}_t^{(i+1)}|\!>\!1$ \textbf{and} $v(\mathcal{S}_t^{(i+1)}\setminus\{n\})\!>\!v(\mathcal{S}_t^{(i+1)})$}
\STATE Update $\mathcal{S}_t^{(i+1)}=\mathcal{S}_t^{(i+1)}\setminus\{n\}$.
\ENDIF
\ENDFOR
\STATE Set $i=i+1$.
\UNTIL{$\mathcal{S}_t^{(i)}=\mathcal{S}_t^{(i-1)}$}
\STATE \textbf{Output:} $\mathcal{S}_t=\mathcal{S}_t^{(i)}$.
\end{algorithmic}
\end{algorithm}

The resulting coalition formation process is summarized in Algorithm~\ref{coalitionalg}, where the initial coalition is set as the empty set to avoid dependence on a specific initialization. During the algorithm, the coalition is constructed through join operations and refined through leave operations according to the coalition utility. To reduce the checking overhead, non-selected devices are examined in descending order of AoI, while selected devices are examined in ascending order of AoI, since devices with larger AoI are more likely to provide a larger freshness gain when joining, whereas devices with smaller AoI tend to cause a smaller freshness loss when leaving. For each candidate coalition, the corresponding antenna placement is updated to obtain the minimum uploading time used in the utility evaluation.
\subsection{Properties Analysis}
The main properties of Algorithm~\ref{coalitionalg} are analyzed in terms of convergence, stability, and computational complexity.

\subsubsection{Convergence}
The convergence of Algorithm~\ref{coalitionalg} follows from the monotonic improvement of the coalition utility and the finite number of feasible coalitions. Specifically, in each iteration, one accepted join or leave operation strictly increases the coalition utility $v(\mathcal{S}_t)$ according to the proposed update rules. Since the number of feasible coalitions satisfying $\mathcal{S}_t\subseteq\mathcal{N}$ and $|\mathcal{S}_t|\ge 1$ is finite, the coalition utility cannot be improved indefinitely. Therefore, Algorithm~\ref{coalitionalg} terminates after a finite number of iterations.

\subsubsection{Stability}
Let $\mathcal{S}_t^\star$ denote the final coalition returned by Algorithm~\ref{coalitionalg}. The algorithm terminates only when no join or leave operation can further improve the coalition utility. Hence, no non-selected device can improve the utility by joining $\mathcal{S}_t^\star$, and no selected device can improve the utility by leaving $\mathcal{S}_t^\star$. According to Definition~\ref{nash}, the final coalition structure $\{\mathcal{S}_t^\star,\mathcal{N}\setminus\mathcal{S}_t^\star\}$ is Nash-stable.

\subsubsection{Complexity}
The computational complexity of Algorithm~\ref{coalitionalg} depends on the number of outer checking loops and the number of candidate devices evaluated in each loop. Let $C$ denote the number of outer loops before convergence. In each outer loop, at most $N$ candidate devices are evaluated for possible joining or leaving, and each accepted operation corresponds to one coalition update. Therefore, the overall computational complexity of Algorithm~\ref{coalitionalg} is $\mathcal{O}(CN)$.
\section{Branch-and-Bound based Antenna Placement}
In this section, the antenna placement problem~\eqr{pinproblem} is solved for a given device selection result $\mathcal{S}_t$. Since the waveguide propagation path only contributes a phase rotation in the adopted channel model, $|h_n^{(t)}|^2$ depends on the antenna position only through the free-space propagation distance between the pinching antenna and device $n$. By defining $a_n\triangleq P_n\eta^2/\sigma^2$ and $b_n\triangleq y_n^2+d^2$, the normalized uploading time of selected device $n$ can be written as follows:
\begin{equation}\label{noruptime}
f_n(x_\mathrm{pin}^{(t)})=\frac{1}{\log_2\left(1+\frac{a_n}{(x_\mathrm{pin}^{(t)}-x_n)^2+b_n}\right)}.
\end{equation}
By omitting the positive constant factor $D_\mathrm{lm}/B$, the antenna placement problem in \eqr{pinproblem} is equivalently expressed as
\begin{subequations}
\begin{empheq}{align}
\min_{x_\mathrm{pin}^{(t)}}\quad 
& \sum_{n\in\mathcal{S}_t}f_n(x_\mathrm{pin}^{(t)})\\
\textrm{s.t.}\quad & \text{(\ref{problem}b)}, \text{and}\,\text{(\ref{problem}c)}.\nonumber
\end{empheq}
\label{pinproblem1}
\end{subequations}\vspace{-2mm}\\
The solution to problem~\eqr{pinproblem1} is presented in the following subsections for the singleton coalition case and the general multi-device case, respectively.
\subsection{Optimal Placement for Singleton Device Case}
Before solving the multi-device case, the singleton device case is first examined, for which the optimal antenna placement admits a closed-form solution. When $\mathcal{S}_t=\{n\}$, the antenna placement problem reduces to minimizing a single term over the mobility constrained feasible interval, i.e.,
\begin{equation}
\min_{x_\mathrm{pin}^{(t)}\in[l^{(t)},u^{(t)}]} f_n(x_\mathrm{pin}^{(t)}),
\end{equation}
where
\begin{equation}
l^{(t)}=\max\left\{-\frac{D_x}{2}, x_\mathrm{pin}^{(t-1)}-v_\mathrm{pin}T_\mathrm{tr}^{(t)}\right\},
\end{equation}
and
\begin{equation}
u^{(t)}=\min\left\{\frac{D_x}{2}, x_\mathrm{pin}^{(t-1)}+v_\mathrm{pin}T_\mathrm{tr}^{(t)}\right\}.
\end{equation}
Since $f_n(x_\mathrm{pin}^{(t)})$ is monotonically increasing with respect to $(x_\mathrm{pin}^{(t)}-x_n)^2$, the optimal antenna position is the feasible point closest to $x_n$. Equivalently, it is obtained by projecting $x_n$ onto the feasible interval $[l^{(t)},u^{(t)}]$, i.e.,
\begin{equation}
x_\mathrm{pin}^{(t)\star}=
\begin{cases}
l^{(t)}, & \text{if } x_n<l^{(t)},\\
x_n, & \text{if } l^{(t)}\le x_n\le u^{(t)},\\
u^{(t)}, & \text{if } x_n>u^{(t)}.
\end{cases}
\end{equation}
This result indicates that the antenna is placed at the projection point of device $n$ if it is feasible; otherwise, it is placed at the nearest boundary of the feasible interval. Thus, the optimal antenna position in the singleton device case can be obtained without further optimization. In the following, the general case with $|\mathcal{S}_t|\ge 2$ is considered.
\subsection{Optimal Antenna Placement Region}
For the multi-device case, the antenna placement problem does not admit a direct closed-form solution due to the summation of multiple nonlinear uploading time terms. Nevertheless, the search region can be tightened by exploiting the monotonicity of each term with respect to the horizontal distance between the antenna and the selected devices. Recall that $[l^{(t)},u^{(t)}]$ denotes the mobility constrained feasible interval. Define the horizontal span of the selected devices as
\begin{equation}
x_\mathrm{min}^{(t)}=\min\{x_n| n\in\mathcal{S}_t\},
\end{equation}
and
\begin{equation}
x_\mathrm{max}^{(t)}=\max\{x_n| n\in\mathcal{S}_t\}.
\end{equation}
Here, $x_n$ is the signed horizontal coordinate of device $n$. The following proposition shows that, when the feasible interval intersects with the span of the selected devices, the global optimum lies in their intersection.
\begin{proposition}\label{feasibleregion}
Define $\underline{x}^{(t)}=\max\{l^{(t)},x_\mathrm{min}^{(t)}\}$ and $\overline{x}^{(t)}=\min\{u^{(t)},x_\mathrm{max}^{(t)}\}$. If $\underline{x}^{(t)}\le \overline{x}^{(t)}$, the global optimal antenna position satisfies
\begin{equation}
x_\mathrm{pin}^{(t)\star}\in[\underline{x}^{(t)}, \overline{x}^{(t)}].
\end{equation}
\end{proposition}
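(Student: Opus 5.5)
The approach is a standard monotonicity argument. Each term $f_n$ in \eqr{noruptime} depends on $x_\mathrm{pin}^{(t)}$ only through $(x_\mathrm{pin}^{(t)}-x_n)^2$, and it is strictly increasing in that quantity: $\log_2(1+a_n/(s+b_n))$ is strictly decreasing in $s\ge 0$, and $a_n,b_n>0$. Hence $f_n$ is strictly decreasing on $(-\infty,x_n]$ and strictly increasing on $[x_n,\infty)$.

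From this, the objective $F(x)=\sum_{n\in\mathcal{S}_t}f_n(x)$ has a simple shape outside the device span:
\begin{itemize}[leftmargin=*]
\item On $(-\infty,x_\mathrm{min}^{(t)}]$, every $x_n\ge x$, so every term is strictly decreasing and $F$ is strictly decreasing.
\item On $[x_\mathrm{max}^{(t)},\infty)$, every term is strictly increasing and $F$ is strictly increasing.
\end{itemize}

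We then compare any feasible point outside $[\underline{x}^{(t)},\overline{x}^{(t)}]$ with a boundary point of that interval. The feasible set of problem~\eqr{pinproblem1} is $[l^{(t)},u^{(t)}]$, obtained by combining (\ref{problem}b) and (\ref{problem}c). Take any feasible $x<\underline{x}^{(t)}$.
\begin{itemize}[leftmargin=*]
\item Since $x\ge l^{(t)}$, we must have $\underline{x}^{(t)}=x_\mathrm{min}^{(t)}>x$, so $x<x_\mathrm{min}^{(t)}$.
\item The assumption $\underline{x}^{(t)}\le\overline{x}^{(t)}$ guarantees that $\underline{x}^{(t)}$ itself is feasible: it is at least $l^{(t)}$ and at most $\overline{x}^{(t)}\le u^{(t)}$.
\item By strict monotonicity of $F$ on $(-\infty,x_\mathrm{min}^{(t)}]$, we get $F(x)>F(x_\mathrm{min}^{(t)})=F(\underline{x}^{(t)})$, so $x$ cannot be optimal.
\end{itemize}
The symmetric argument handles a feasible $x>\overline{x}^{(t)}$. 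There $\overline{x}^{(t)}=x_\mathrm{max}^{(t)}<x$, the point $\overline{x}^{(t)}$ is feasible, and $F(x)>F(\overline{x}^{(t)})$.

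It remains to confirm that a global minimizer exists. $F$ is continuous on the compact interval $[l^{(t)},u^{(t)}]$, so it attains its minimum there. By the comparison above, every minimizer lies in $[\underline{x}^{(t)},\overline{x}^{(t)}]$, which establishes Proposition~\ref{feasibleregion}.

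The only delicate step is the case bookkeeping: showing that a feasible point left of $\underline{x}^{(t)}$ must lie left of $x_\mathrm{min}^{(t)}$ rather than left of $l^{(t)}$, and using the nonemptiness hypothesis to ensure the comparison point is feasible. The monotonicity itself is routine.
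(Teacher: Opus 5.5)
Your proposal is correct and follows essentially the same argument as the paper's Appendix~B. Since each $f_n$ is strictly increasing in $(x_\mathrm{pin}^{(t)}-x_n)^2$, any feasible point to the left of $x_\mathrm{min}^{(t)}$ (resp.\ right of $x_\mathrm{max}^{(t)}$) is strictly worse than $x_\mathrm{min}^{(t)}$ (resp.\ $x_\mathrm{max}^{(t)}$). Your checks that this comparison point is feasible under $\underline{x}^{(t)}\le\overline{x}^{(t)}$, and that a minimizer exists by compactness, are details the paper leaves implicit, so your version is slightly more complete.
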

\begin{IEEEproof}
Refer to Appendix~B.
\end{IEEEproof}

As a direct consequence of Proposition~\ref{feasibleregion}, when the mobility constrained feasible interval does not intersect with the span of the selected devices, the optimal antenna position can be obtained directly from the boundary of the feasible interval, as stated below.
\begin{corollary}\label{emptyregion}
If $\underline{x}^{(t)}>\overline{x}^{(t)}$, the optimal antenna position is given by
\begin{equation}
x_\mathrm{pin}^{(t)\star}=
\begin{cases}
u^{(t)}, & \text{if } u^{(t)}<x_\mathrm{min}^{(t)},\\
l^{(t)}, & \text{if } l^{(t)}>x_\mathrm{max}^{(t)}.
\end{cases}
\end{equation}
\end{corollary}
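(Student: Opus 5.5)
The approach is to establish that each term $f_n$ is monotone on either side of the device span. On the two half-lines outside $[x_\mathrm{min}^{(t)},x_\mathrm{max}^{(t)}]$, every $f_n$ then moves in the same direction. As a result, the sum $F(x)=\sum_{n\in\mathcal{S}_t}f_n(x)$ attains its minimum over the feasible interval $[l^{(t)},u^{(t)}]$ at the endpoint nearest the span.

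First, I would record the elementary monotonicity fact already used in the singleton case. Each $f_n$ is strictly increasing in $(x-x_n)^2$, because the argument of the logarithm strictly decreases in $(x-x_n)^2$ and stays above $1$ (since $a_n>0$ and $b_n>0$). Hence $f_n$ is strictly decreasing on $(-\infty,x_n]$ and strictly increasing on $[x_n,\infty)$.

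Second, I would observe that $\underline{x}^{(t)}>\overline{x}^{(t)}$ forces one of two configurations. Because $l^{(t)}\le u^{(t)}$ and $x_\mathrm{min}^{(t)}\le x_\mathrm{max}^{(t)}$, and since two nonempty closed intervals intersect exactly when $\max\{l^{(t)},x_\mathrm{min}^{(t)}\}\le\min\{u^{(t)},x_\mathrm{max}^{(t)}\}$, disjointness means either $u^{(t)}<x_\mathrm{min}^{(t)}$ or $l^{(t)}>x_\mathrm{max}^{(t)}$. These two cases are mutually exclusive, so the piecewise statement is well defined and exhaustive.

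Third, I would treat each case separately.
\begin{itemize}
\item In the case $u^{(t)}<x_\mathrm{min}^{(t)}$, every feasible $x$ satisfies $x\le u^{(t)}<x_\mathrm{min}^{(t)}\le x_n$ for all $n\in\mathcal{S}_t$. Each $f_n$ is therefore strictly decreasing on $[l^{(t)},u^{(t)}]$, and so is their sum $F$. The unique minimizer is $u^{(t)}$.
\item The case $l^{(t)}>x_\mathrm{max}^{(t)}$ is symmetric. Every $f_n$ is strictly increasing on the feasible interval, so the minimizer is $l^{(t)}$.
\end{itemize}
This can also be phrased as a consequence of the argument behind Proposition~\ref{feasibleregion}: moving any feasible point toward the span decreases every term simultaneously.

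There is no real obstacle here. The only care needed is the case split in the second step, namely checking that the disjointness condition is equivalent to the two stated orderings and that they cannot hold simultaneously. Feasibility itself is not an issue: $x_\mathrm{pin}^{(t-1)}\in[-D_x/2,D_x/2]$ and $v_\mathrm{pin}T_\mathrm{tr}^{(t)}\ge0$ together give $l^{(t)}\le u^{(t)}$.
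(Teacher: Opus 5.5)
Your proposal is correct and takes essentially the same route as the paper: disjointness of $[l^{(t)},u^{(t)}]$ and $[x_\mathrm{min}^{(t)},x_\mathrm{max}^{(t)}]$ splits into $u^{(t)}<x_\mathrm{min}^{(t)}$ or $l^{(t)}>x_\mathrm{max}^{(t)}$, and since every $f_n$ is increasing in $(x_\mathrm{pin}^{(t)}-x_n)^2$, the sum is monotone on the feasible interval and is minimized at the endpoint nearest the device span. Your additional checks are details the paper leaves implicit: that $l^{(t)}\le u^{(t)}$, that the two cases are exhaustive and mutually exclusive, and that strict monotonicity gives a unique minimizer.
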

\begin{IEEEproof}
When $\underline{x}^{(t)}>\overline{x}^{(t)}$, the mobility constrained feasible interval $[l^{(t)},u^{(t)}]$ and the span of the selected devices $[x_\mathrm{min}^{(t)},x_\mathrm{max}^{(t)}]$ do not intersect. Therefore, either the entire feasible interval lies to the left of the selected device span, i.e., $u^{(t)}<x_\mathrm{min}^{(t)}$, or it lies to the right, i.e., $l^{(t)}>x_\mathrm{max}^{(t)}$.

If $u^{(t)}<x_\mathrm{min}^{(t)}$, increasing $x_\mathrm{pin}^{(t)}$ within $[l^{(t)},u^{(t)}]$ reduces the horizontal distance to all selected devices. Since each $f_n(x_\mathrm{pin}^{(t)})$ is monotonically increasing with respect to $(x_\mathrm{pin}^{(t)}-x_n)^2$, the objective function is monotonically decreasing, and the optimum is attained at $u^{(t)}$. Conversely, if $l^{(t)}>x_\mathrm{max}^{(t)}$, increasing $x_\mathrm{pin}^{(t)}$ enlarges the horizontal distance to all selected devices. Hence, the objective function is monotonically increasing, and the optimum is attained at $l^{(t)}$. The proof is completed.
\end{IEEEproof}

According to Proposition~\ref{feasibleregion} and Corollary~\ref{emptyregion}, the antenna placement problem in \eqr{pinproblem1} can be divided into two cases. If $\underline{x}^{(t)}>\overline{x}^{(t)}$, the optimal antenna position is directly obtained from the boundary of the mobility constrained feasible interval. Otherwise, when $\underline{x}^{(t)}\le\overline{x}^{(t)}$, problem~\eqr{pinproblem1} is reduced to to the following problem:
\begin{equation}\label{pinproblem2}
\min_{x_\mathrm{pin}^{(t)}\in[\underline{x}^{(t)},\overline{x}^{(t)}]}\quad\sum_{n\in\mathcal{S}_t}f_n(x_\mathrm{pin}^{(t)}).
\end{equation}
Based on Proposition~\ref{feasibleregion}, the antenna placement problem is confined to a one-dimensional search over the interval $[\underline{x}^{(t)},\overline{x}^{(t)}]$. This reduced search interval makes BnB suitable for obtaining the globally optimal solution with lower computational complexity. Specifically, the BnB procedure iteratively partitions the interval, constructs valid lower and upper bounds for each subinterval, and prunes subintervals that cannot contain the global optimum  \cite{boyd2007branch}.
\subsection{Interval based Bound Construction}
To implement the BnB procedure, valid lower and upper bounds need to be constructed over each generated interval. Consider an arbitrary interval $\mathcal{I}=[x_\mathrm{L},x_\mathrm{U}]\subseteq[\underline{x}^{(t)},\overline{x}^{(t)}]$. For each selected device $n\in\mathcal{S}_t$, define $d_n(\mathcal{I})$ as the minimum value of $(x_\mathrm{pin}^{(t)}-x_n)^2$ over $\mathcal{I}$, i.e.,
\begin{equation}
d_n(\mathcal{I})=
\begin{cases}
0, & \text{if } x_n\in[x_\mathrm{L},x_\mathrm{U}],\\
(x_\mathrm{L}-x_n)^2, & \text{if } x_n<x_\mathrm{L},\\
(x_n-x_\mathrm{U})^2, &  \text{if }x_n>x_\mathrm{U}.
\end{cases}
\end{equation}
That is, $d_n(\mathcal{I})=0$ when the projection point $x_n$ lies inside the interval $\mathcal{I}$; otherwise, $d_n(\mathcal{I})$ is the squared distance from $x_n$ to the nearest endpoint of $\mathcal{I}$. Therefore, for any $x_\mathrm{pin}^{(t)}\in\mathcal{I}$, the following inequality can be obtained:
\begin{equation}
(x_\mathrm{pin}^{(t)}-x_n)^2 \ge d_n(\mathcal{I}).
\end{equation}
Since $f_n(x_\mathrm{pin}^{(t)})$ is monotonically increasing with respect to $(x_\mathrm{pin}^{(t)}-x_n)^2$, it follows that
\begin{equation}
f_n(x_\mathrm{pin}^{(t)})\ge\frac{1}{\log_2\left(1+\frac{a_n}{d_n(\mathcal{I})+b_n}\right)}.
\end{equation}
By summing over all selected devices, a valid lower bound of the objective value over $\mathcal{I}$ is obtained as
\begin{equation}\label{lowerbound}
\ell(\mathcal{I})=\sum_{n\in\mathcal{S}_t}\frac{1}{\log_2\left(1+\frac{a_n}{d_n(\mathcal{I})+b_n}\right)}.
\end{equation}

For upper bounding, the objective value evaluated at any feasible point in $\mathcal{I}$ provides a valid upper bound. To obtain a tighter bound with low computational cost, the two endpoints and the midpoint of $\mathcal{I}$ are evaluated. With the midpoint
\begin{equation}
x_\mathrm{M}=\frac{x_\mathrm{L}+x_\mathrm{U}}{2},
\end{equation}
the upper bound over $\mathcal{I}$ is defined as
\begin{equation}\label{upperbound}
u(\mathcal{I})=\min\left\{\sum_{n\in\mathcal{S}_t}\!f_n(x_\mathrm{L}),\sum_{n\in\mathcal{S}_t}\!f_n(x_\mathrm{M}),\sum_{n\in\mathcal{S}_t}\!f_n(x_\mathrm{U})\right\}.
\end{equation}
Let $x(\mathcal{I})\in\{x_\mathrm{L},x_\mathrm{M},x_\mathrm{U}\}$ denote the feasible point attaining the minimum value in \eqr{upperbound}. Since $x(\mathcal{I})$ is a feasible point in $\mathcal{I}$, $u(\mathcal{I})$ is a valid upper bound.
\subsection{Branching, Bounding, and Pruning}
Let $\mathcal{P}$ denote the set of unpruned intervals. At each iteration, the interval with the smallest lower bound is selected for branching, i.e.,
\begin{equation}\label{activeinterval}
\mathcal{I}^\star=\arg\min_{\mathcal{I}\in\mathcal{P}}\ell(\mathcal{I}),
\end{equation}
where $\mathcal{I}^\star=[x_\mathrm{L}^\star,x_U^\star]$. Since problem~\eqr{pinproblem2} is one-dimensional, $\mathcal{I}^\star$ is bisected at its midpoint
\begin{equation}\label{midpoint}
x_\mathrm{M}^\star=\frac{x_\mathrm{L}^\star+x_U^\star}{2},
\end{equation}
which results in two subintervals $\mathcal{I}^-=[x_\mathrm{L}^\star,x_\mathrm{M}^\star]$ and $\mathcal{I}^+=[x_\mathrm{M}^\star,x_U^\star]$.

For the two subintervals $\mathcal{I}^-$ and $\mathcal{I}^+$, the corresponding lower and upper bounds are evaluated according to \eqr{lowerbound} and \eqr{upperbound}, respectively. Let $x_\mathrm{inc}$ denote the incumbent feasible point associated with the current global upper bound $U$. Define
\begin{equation}\label{ustar}
U^\star=\min\left\{U, u(\mathcal{I}^-),u(\mathcal{I}^+)\right\}.
\end{equation}
The incumbent feasible point is updated as
\begin{equation}\label{incupdate}
x_\mathrm{inc}=
\begin{cases}
x(\mathcal{I}^-), & \text{if } U^\star=u(\mathcal{I}^-),\\
x(\mathcal{I}^+), & \text{if } U^\star=u(\mathcal{I}^+),\\
x_\mathrm{inc}, & \text{otherwise}.
\end{cases}
\end{equation}
The global upper and lower bounds are then updated as
\begin{equation}\label{globalub}
U=U^\star,
\end{equation}
and
\begin{equation}\label{globallb}
L=\min_{\mathcal{I}\in\mathcal{P}}\ell(\mathcal{I}).
\end{equation}

Moreover, any unpruned interval $\mathcal{I}\in\mathcal{P}$ satisfying
\begin{equation}
\ell(\mathcal{I})\ge U-\varepsilon
\end{equation}
can be safely pruned, where $\varepsilon>0$ denotes the optimality tolerance. This pruning rule preserves global optimality, since no point in such an interval can achieve an objective value smaller than the current global upper bound by more than $\varepsilon$. Based on these procedures, the BnB based antenna placement algorithm is presented in Algorithm~\ref{bnbalg}.

\begin{algorithm}[t]
\caption{BnB based Antenna Placement for Problem~\eqr{pinproblem2}}
\label{bnbalg}
\begin{algorithmic}[1]
\STATE \textbf{Input:} Feasible interval $[\underline{x}^{(t)},\overline{x}^{(t)}]$, tolerance $\varepsilon>0$.
\STATE \textbf{Initialization:}
\STATE Set $\mathcal{I}^{(0)}=[\underline{x}^{(t)},\overline{x}^{(t)}]$ and $\mathcal{P}=\{\mathcal{I}^{(0)}\}$.
\STATE Compute $\ell(\mathcal{I}^{(0)})$ by \eqr{lowerbound}.
\STATE Compute $u(\mathcal{I}^{(0)})$ and $x(\mathcal{I}^{(0)})$ by \eqr{upperbound}.
\STATE Initialize $U=u(\mathcal{I}^{(0)})$, $L=\ell(\mathcal{I}^{(0)})$, and $x_\mathrm{inc}=x(\mathcal{I}^{(0)})$.
\WHILE{$U-L>\varepsilon$}
\STATE Select the interval $\mathcal{I}^\star$ by \eqr{activeinterval}, and remove $\mathcal{I}^\star$ from $\mathcal P$.
\STATE Compute the midpoint $x_\mathrm{M}^\star$ by \eqr{midpoint}.
\STATE Bisect $\mathcal{I}^\star$ into $\mathcal{I}^-$ and $\mathcal{I}^+$.
\STATE Compute $\ell(\mathcal{I}^-)$ and $\ell(\mathcal{I}^+)$ by \eqr{lowerbound}.
\STATE Compute $u(\mathcal{I}^-)$, $x(\mathcal{I}^-)$, $u(\mathcal{I}^+)$, and $x(\mathcal{I}^+)$ by \eqr{upperbound}.
\STATE Compute $U^\star$ by \eqr{ustar}.
\STATE Update the incumbent feasible point $x_\mathrm{inc}$ by \eqr{incupdate}.
\STATE Update the global upper bound $U$ by \eqr{globalub}.
\FOR{each $\mathcal{I}\in\{\mathcal{I}^-,\mathcal{I}^+\}$}
\IF{$\ell(\mathcal{I})<U-\varepsilon$}
\STATE Add $\mathcal{I}$ to $\mathcal{P}$.
\ENDIF
\ENDFOR
\STATE Update the global lower bound $L$ by \eqr{globallb}.
\ENDWHILE
\STATE \textbf{Output:} $x_\mathrm{inc}$ and $U$.
\end{algorithmic}
\end{algorithm}
\subsection{Properties Analysis}
The convergence, optimality, and complexity of Algorithm~\ref{bnbalg} are analyzed below.

\subsubsection{Convergence}
The proposed algorithm follows the standard BnB framework for one-dimensional global optimization. The lower bound $\ell(\mathcal{I})$ in \eqr{lowerbound} and the upper bound $u(\mathcal{I})$ in \eqr{upperbound} bound the local optimal objective value over each unpruned interval $\mathcal{I}\in\mathcal{P}$. During the iterations, $U$ is non-increasing, while $L$ is updated as the minimum lower bound among all unpruned intervals. Moreover, intervals satisfying $\ell(\mathcal{I})\ge U-\varepsilon$ are safely pruned, and the incumbent feasible point $x_\mathrm{inc}$ is updated together with $U$. Therefore, the proposed BnB procedure progressively reduces the search space while maintaining valid global lower and upper bounds.

\subsubsection{Complexity}
The computational complexity of the proposed BnB algorithm mainly depends on the number of intervals evaluated before termination. Let $N_\mathrm{int}$ denote the total number of generated intervals. For each interval $\mathcal{I}$, computing the lower bound in \eqr{lowerbound} requires evaluating $d_n(\mathcal{I})$ for all $n\in\mathcal{S}_t$, resulting in complexity $\mathcal{O}(|\mathcal{S}_t|)$. Similarly, computing the upper bound in \eqr{upperbound} requires evaluating the objective function at three feasible points, which also incurs complexity $\mathcal{O}(|\mathcal{S}_t|)$. Therefore, ignoring constant factors, the overall computational complexity is $\mathcal{O}(N_\mathrm{int}|\mathcal{S}_t|)$.

\subsubsection{Optimality}
By construction, $L$ and $U$ are valid global lower and upper bounds of problem~\eqr{pinproblem2}, respectively. Let $f^\star$ denote the global optimal objective value. Then,
\begin{equation}
L\le f^\star\le U.
\end{equation}
Algorithm~\ref{bnbalg} terminates when $U-L\le\varepsilon$, which implies
\begin{equation}
U-f^\star\le U-L\le\varepsilon.
\end{equation}
Since $U$ is attained by the incumbent feasible point $x_\mathrm{inc}$, the output solution satisfies
\begin{equation}
\sum_{n\in\mathcal{S}_t}f_n(x_\mathrm{inc})=U\le f^\star+\varepsilon.
\end{equation}
Therefore, Algorithm~\ref{bnbalg} returns an $\varepsilon$-optimal solution to problem~\eqr{pinproblem2}.
\section{Simulation Results}
In this section, the learning and system performance of the proposed AoI aware device selection and pinching antenna placement scheme under finite moving speed are evaluated. For comparison, several benchmark schemes are considered. For antenna placement, the midpoint based scheme places the pinching antenna at the midpoint between the minimum and maximum $x$-coordinates of the selected devices. For device selection, the sequential selection scheme selects devices according to a fixed order from device $1$ to device $N$, where as many devices as possible are selected in each communication round until the latency threshold is violated. The random selection scheme generates a random ordering of devices in each round and sequentially adds devices according to this order until the latency threshold cannot be satisfied. The main simulation parameters are summarized in Table~\ref{parameter}.

\begin{table}[t]
\centering
\caption{Simulation Parameters}\vspace{-1mm}
\label{parameter}
\begin{tabular}{lc}
\hline
\textbf{Parameter} & \textbf{Value} \\ \hline
Number of devices ($N$) & $20$ \\
CPU cycles per sample ($\mu$) & $10^7$ \\
Carrier frequency ($f_c$) & $28$~GHz\\
Noise power ($\sigma^2$) & $-90$~dBm\\ 
Bandwidth ($B$) & $10$~MHz\\
Effective refractive index ($n_\mathrm{eff}$) & $1.4$ \\
Waveguide height ($d$) & $3$~m \\
Service area ($D_x \times D_y$) & $50$~m $\times$ $20$~m \\
Optimality tolerance $(\varepsilon)$ & $10^{-4}$ \\ \hline
\end{tabular}
\end{table}

\subsection{Learning Performance}
To evaluate the learning performance, MNIST and CIFAR-10 are adopted. For MNIST, $9000$ non-IID training samples are distributed among $N=20$ devices, where the local data size increases from $260$ to $640$ with a step size of $20$. Each device contains samples from only one label. For CIFAR-10, all $50000$ training samples are used, and the local data size increases from $2025$ to $2975$ with a step size of $50$. To construct a mild non-IID distribution for CIFAR-10, $50\%$ of each device's local data is drawn from an IID data pool, while the remaining $50\%$ is drawn from a single dominant label. Full-batch local training is adopted for both datasets, and each selected device performs one local epoch in each communication round. The stochastic gradient descent optimizer with a learning rate of $0.1$ is used for both datasets. For MNIST, a single hidden layer multilayer perceptron (MLP) with $256$ hidden neurons and ReLU activation is used, followed by a softmax output layer. For CIFAR-10, a lightweight convolutional neural network (CNN) is adopted, which consists of two $3\times3$ convolutional layers with $32$ filters, one max pooling layer, one $3\times3$ convolutional layer with $64$ filters, another max pooling layer, a fully connected layer with $128$ neurons, and a softmax output layer. For each dataset, the simulations are conducted with $10$ different initial models, and the presented learning curves are obtained by averaging the corresponding results. For better visualization, the learning curves are further smoothed using a moving average window of $20$ communication rounds.

\begin{figure}[!t]
\centering{
\subfigure[Antenna Position]{\centering{\includegraphics[width=82mm]{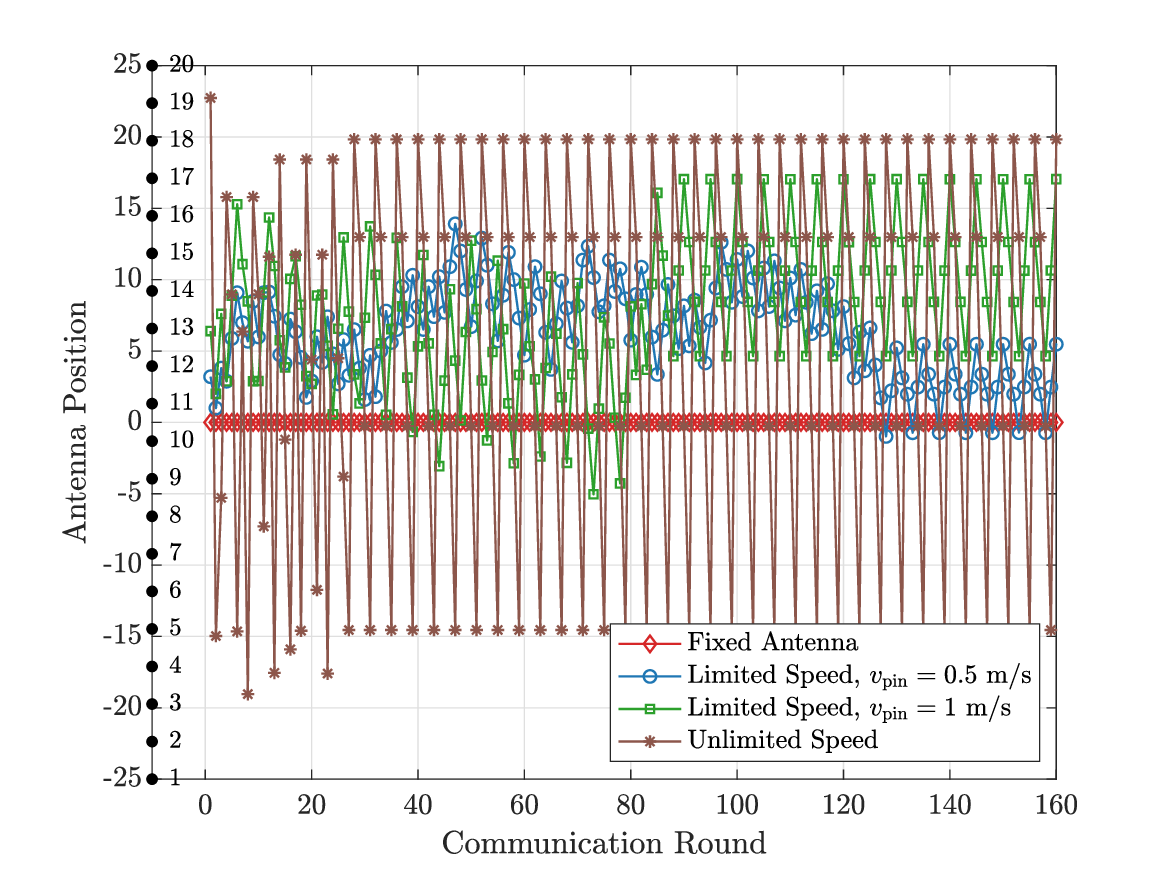}}}\vspace{-2mm}
\subfigure[Test Accuracy]{\centering{\includegraphics[width=82mm]{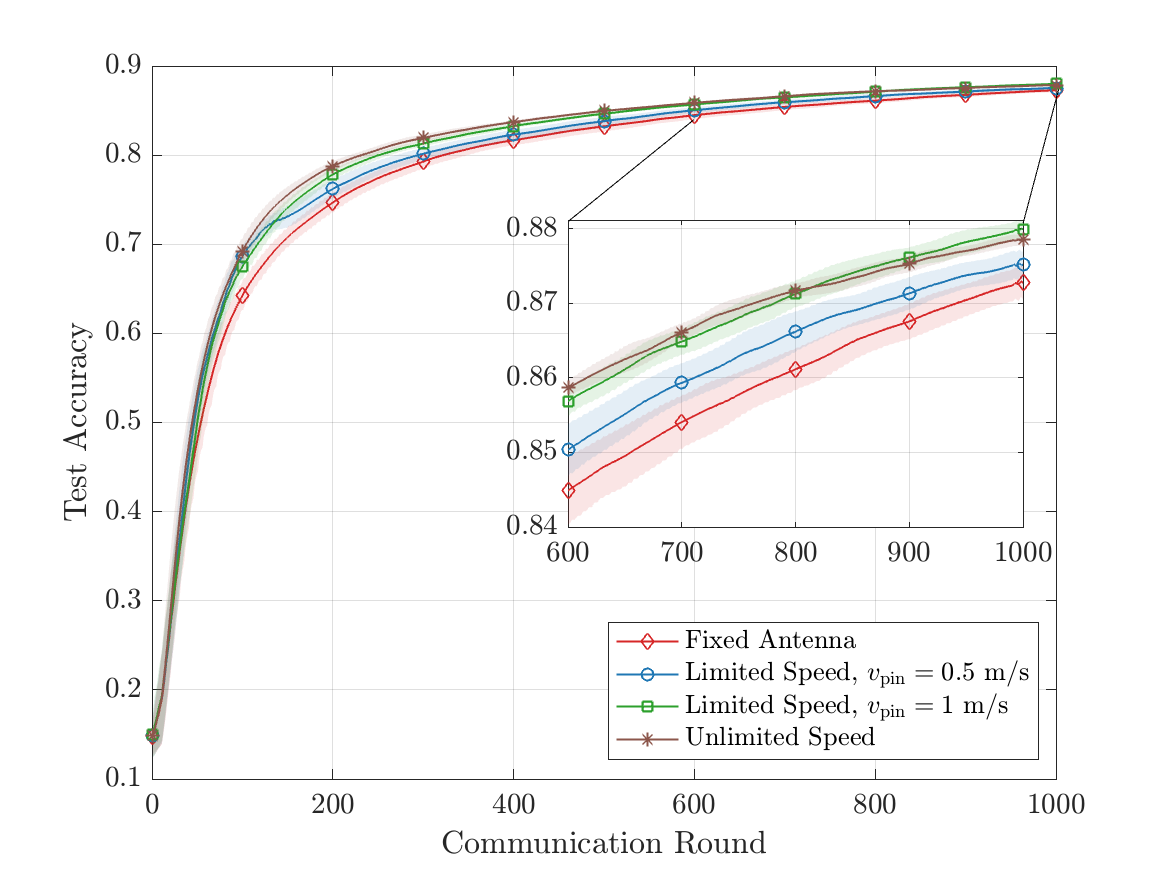}}}\vspace{-2mm}
\subfigure[Sum AoI]{\centering{\includegraphics[width=82mm]{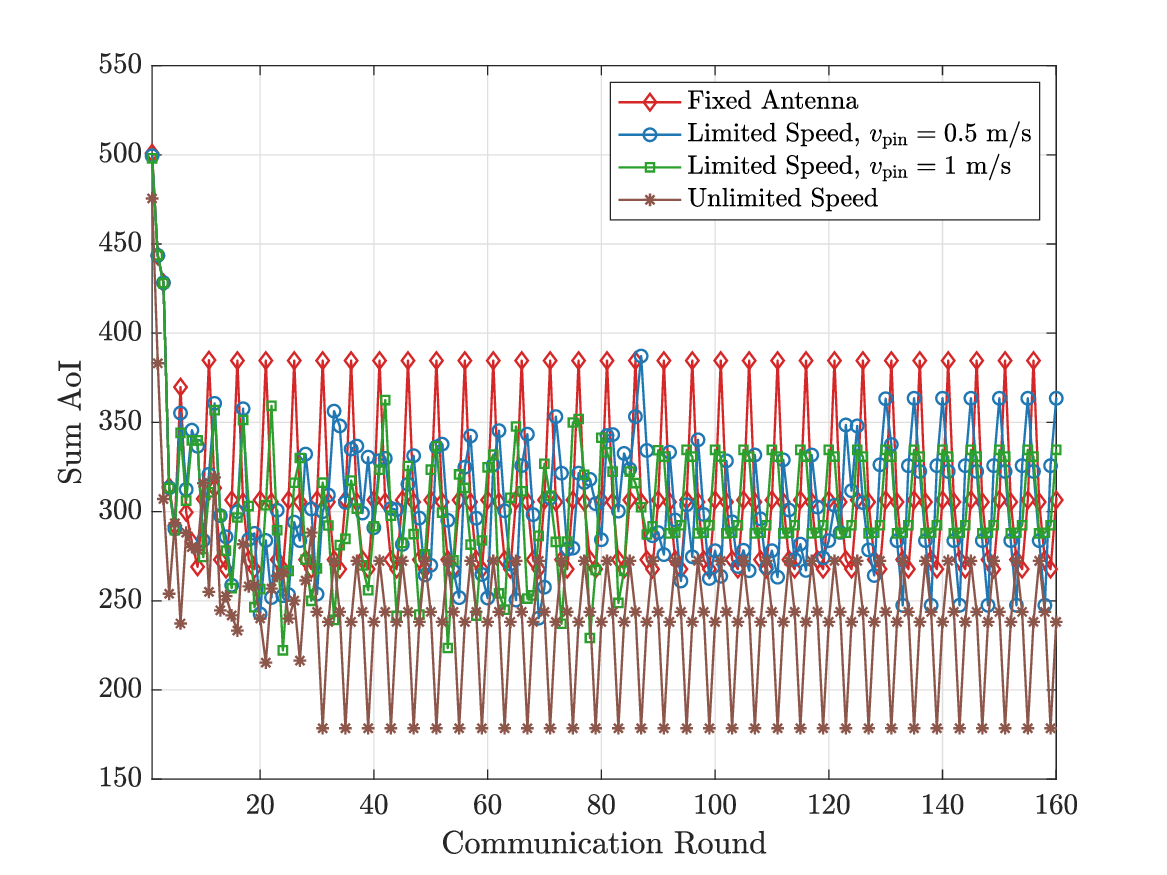}}}}\vspace{-2mm}
\caption{Learning performance on MNIST under different antenna moving speeds with fixed device deployment, where $C_n=1$~GHz, $D_\mathrm{lm}=50$~Mbits, $P_t=20$~dBm, $T_\mathrm{max}=8$~s, and $A_0=20$.}\vspace{-6mm}
\label{result1}
\end{figure}

\fref{result1} illustrates the impact of antenna moving speed on the learning performance under a unique device distribution, where the devices are uniformly deployed below the waveguide with an interval of $2.5$~m, as shown on the $y$-axis of \fref{result1}(a). It can be observed from \fref{result1}(b) that the learning performance improves as the antenna moving speed increases. Specifically, the speed limited pinching antenna with $v_\mathrm{pin}=0.5$~m/s outperforms the fixed antenna case by about $100$ communication rounds, while increasing the speed to $v_\mathrm{pin}=1$~m/s provides a further acceleration of about $100$ communication rounds. The unlimited speed case achieves the fastest convergence, but its performance gain over the high-speed case is relatively small, indicating that most of the mobility gain can be obtained with a moderate moving speed. The antenna trajectories in \fref{result1}(a) show that the pinching antenna tends to move closer to devices with larger indices, especially under limited antenna moving speeds. This is because these devices have more local samples, which not only leaves a smaller uploading time budget under the fixed deadline $T_\mathrm{max}$ due to longer training times, but also makes their updates more important for learning performance. Meanwhile, a larger antenna moving speed enables the trajectory to reach a stable movement pattern more rapidly. In particular, the unlimited speed case becomes regular within about $30$ communication rounds, while the high-speed and low-speed cases require approximately $90$ and $130$ rounds, respectively. This is consistent with \fref{result1}(c), where the sum AoI also approaches a stable periodic behavior as the antenna movement becomes regular.

\begin{figure}[!t]
\centering{
\subfigure[Test Accuracy]{\centering{\includegraphics[width=82mm]{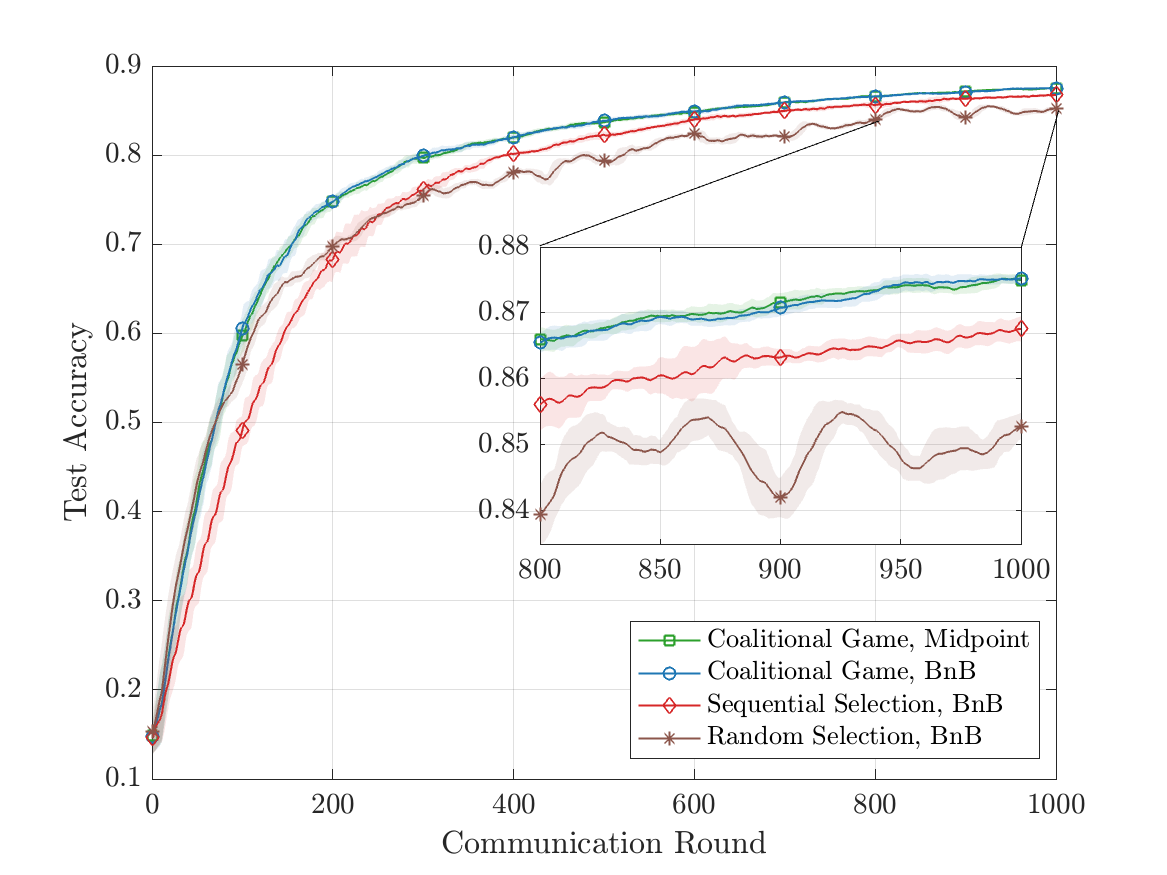}}}\vspace{-2mm}
\subfigure[Sum AoI]{\centering{\includegraphics[width=82mm]{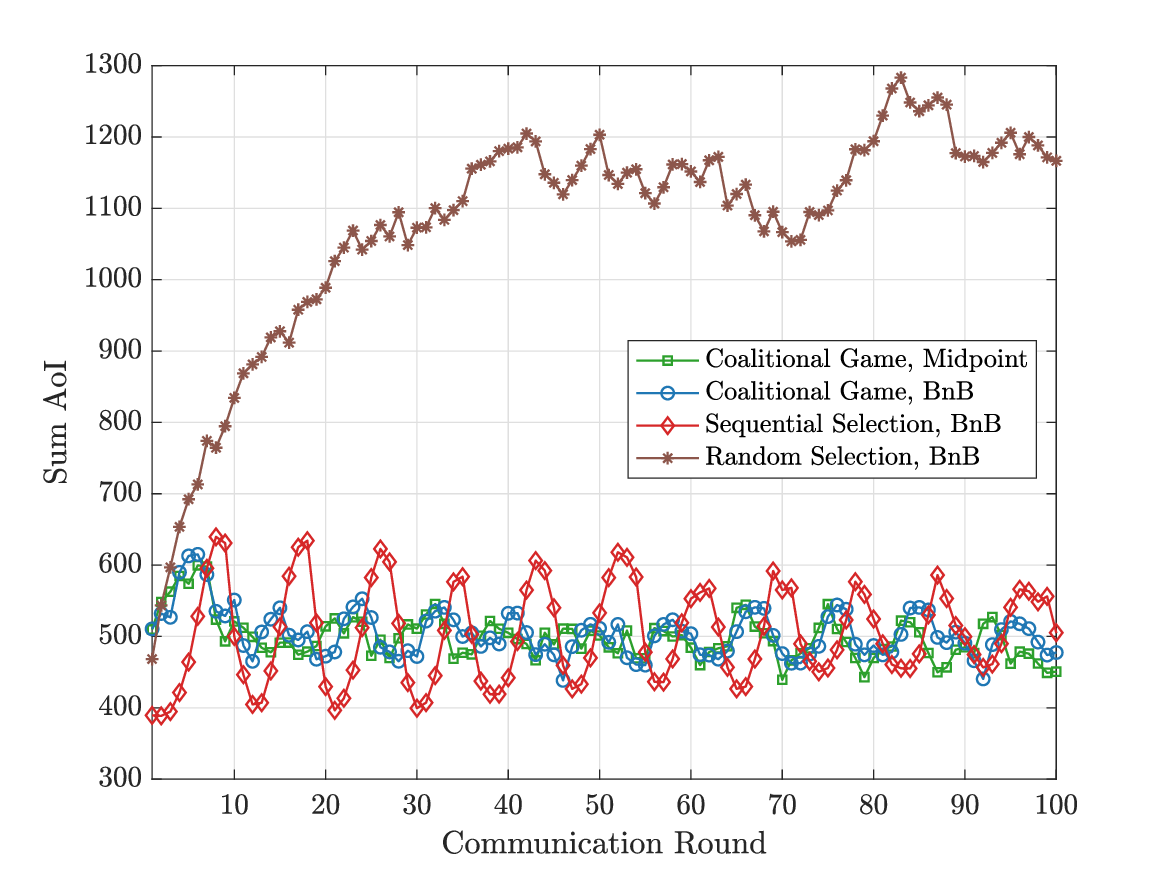}}}}\vspace{-2mm}
\caption{Learning performance on MNIST under different device selection schemes with random device locations, where $v_\mathrm{pin}=0.5$~m/s, $C_n=1$~GHz, $D_\mathrm{lm}=50$~Mbits, $P_t=20$~dBm, $T_\mathrm{max}=7$~s, and $A_0=20$.}\vspace{-6mm}
\label{result2}
\end{figure}

\begin{figure}[!t]
\centering{
\subfigure[Test Accuracy]{\centering{\includegraphics[width=82mm]{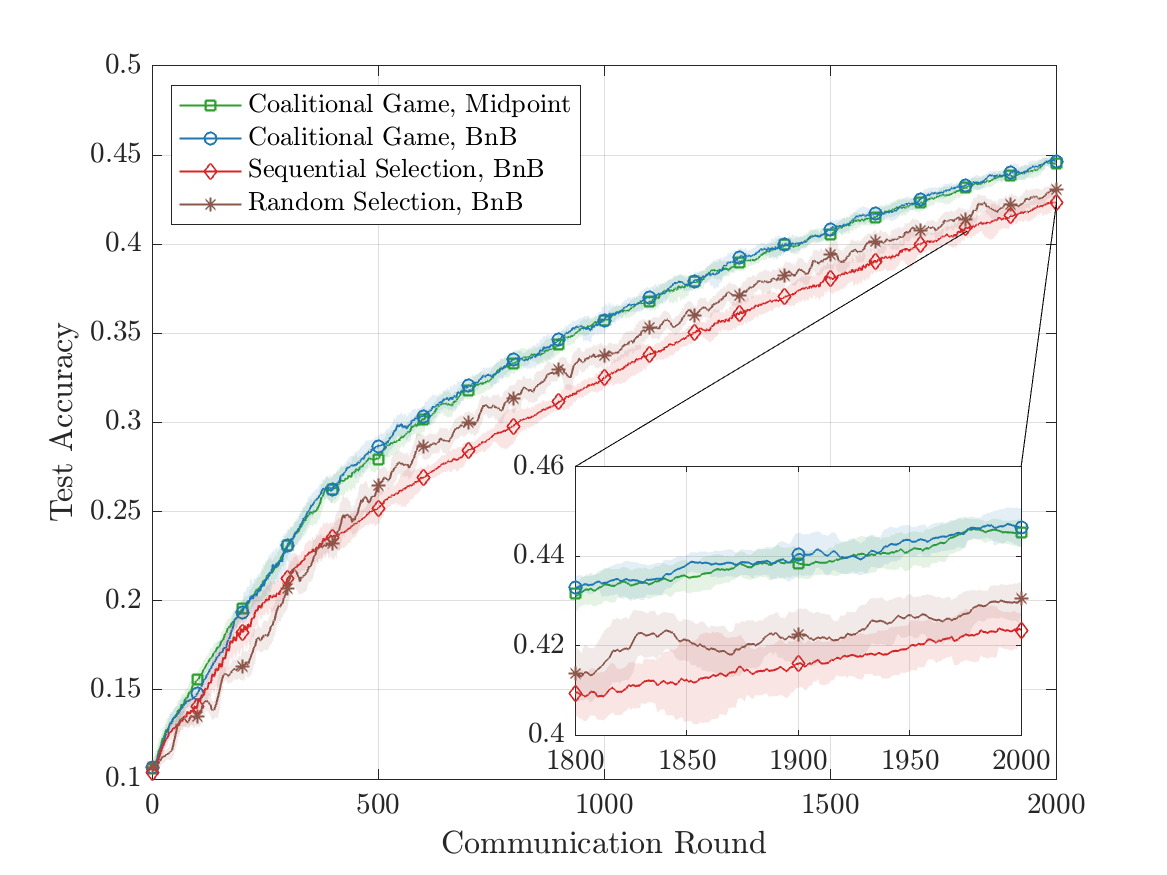}}}\vspace{-2mm}
\subfigure[Sum AoI]{\centering{\includegraphics[width=82mm]{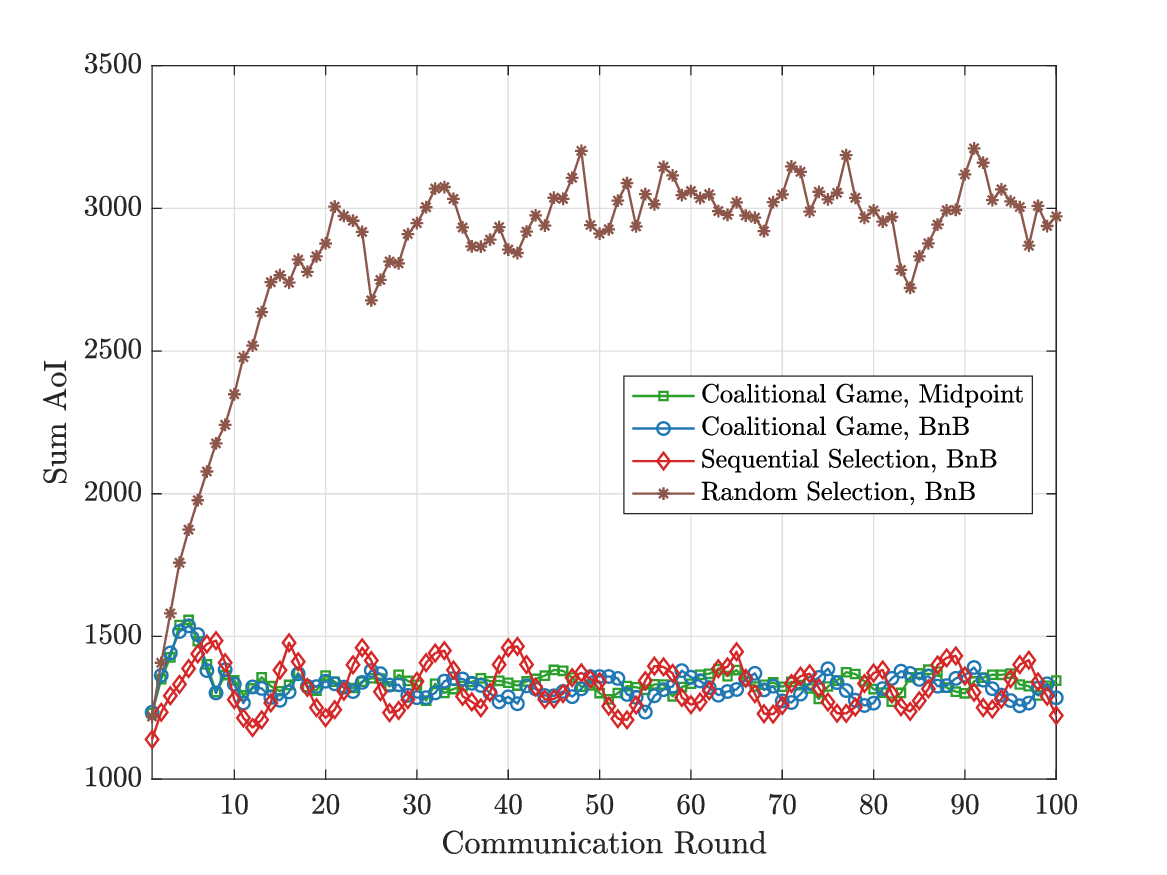}}}}\vspace{-2mm}
\caption{Learning performance on CIFAR-10 under different device selection schemes with random device locations, where $v_\mathrm{pin}=0.5$~m/s, $C_n=2$~GHz, $D_\mathrm{lm}=200$~Mbits, $P_t=20$~dBm, $T_\mathrm{max}=20$~s, and $A_0=50$.}\vspace{-6mm}
\label{result3}
\end{figure}

Figs.~\ref{result2} and \ref{result3} compare the learning performance and sum AoI of different device selection schemes on MNIST and CIFAR-10, respectively. To evaluate the performance under stochastic device deployments, the device locations are randomly regenerated in each communication round. It can be observed from Figs.~\ref{result2}(a) and \ref{result3}(a) that the proposed AoI based device selection scheme achieves the fastest convergence and highest test accuracy on both datasets, which demonstrates the effectiveness of jointly considering device freshness, round latency, and antenna placement. In contrast, the sequential selection and random selection schemes suffer from unbalanced device participation, since they cannot explicitly optimize the freshness-latency tradeoff. The midpoint based scheme is motivated by Proposition~\ref{feasibleregion}, which establishes that the optimal antenna position lies within the interval spanned by the selected devices. Under MNIST, the relatively limited feasible antenna movement range results in a narrow selected device interval, within which the midpoint solution is close to the BnB solution. For CIFAR-10, the longer local training time enlarges the feasible movement range and allows the coalitional game based scheme to select devices spanning a wider interval, making the performance advantage of the BnB based placement more pronounced. Furthermore, Figs.~\ref{result2}(b) and \ref{result3}(b) show that the proposed scheme achieves a lower sum AoI than both baseline schemes, confirming that maintaining fresher device participation is beneficial for learning under non-IID data distributions. Comparing Figs.~\ref{result2}(b) and \ref{result3}(b), the sum AoI fluctuation on CIFAR-10 is less significant, since the larger local data size also allows the pinching antenna to move over a longer distance in each round, leading to more stable uploading latency and AoI evolution.
\subsection{System Performance}
For the system performance evaluation, the device data sizes follow the same unequal distribution as that used for MNIST in the learning performance evaluation. Specifically, $9000$ training samples are distributed among $N=20$ devices, where the local data size increases from $260$ to $640$ with a step size of $20$. To obtain statistically reliable results, Monte Carlo simulations are conducted with $10^5$ independent random device deployments for each data point, where the initial antenna position in each round is set as the optimized antenna position obtained in the previous round.

\begin{figure}[!t]
\hspace{-6mm}\centering{\includegraphics[width=96mm]{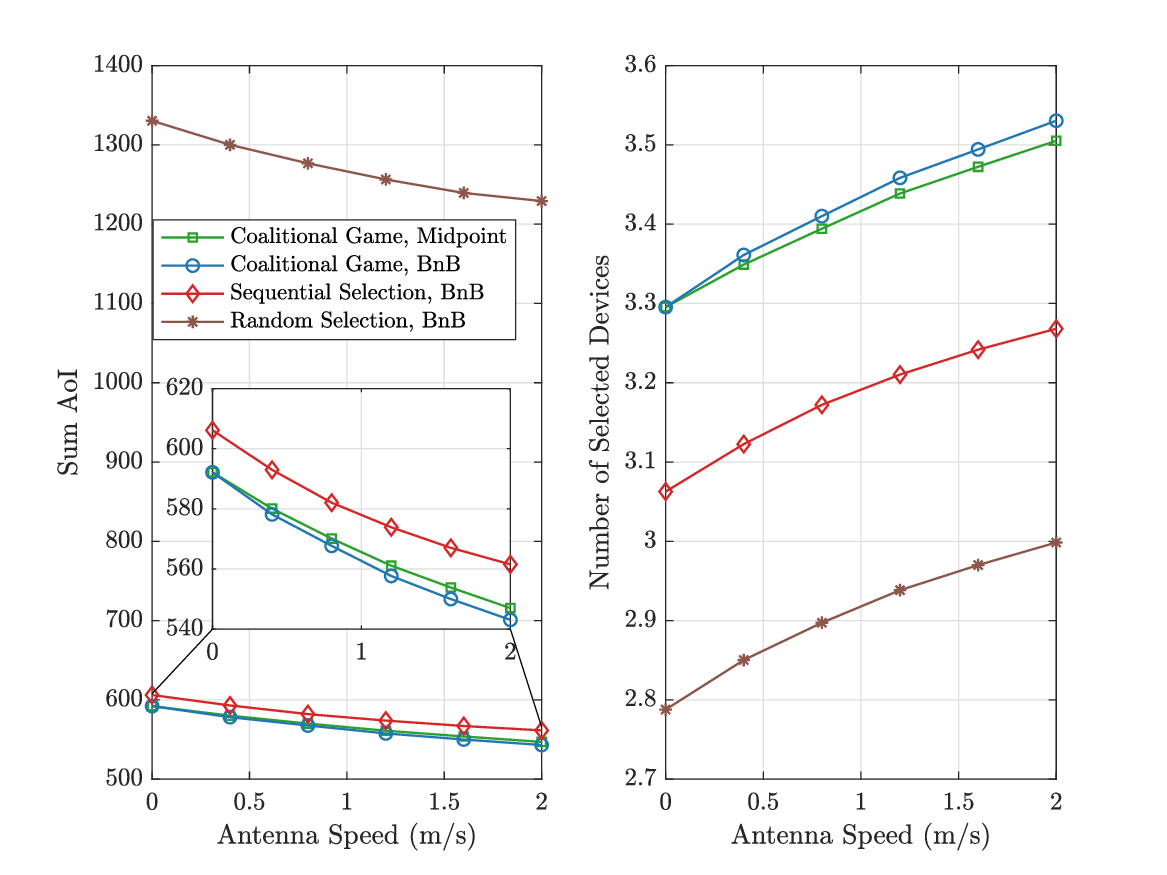}}
\caption{Impact of antenna moving speed on system performance under different schemes, where $C_n=1$~GHz, $D_\mathrm{lm}=100$~Mbits, $P_t=10$~dBm, $T_\mathrm{max}=12$~s, and $A_0=20$.}\vspace{-6mm}
\label{result4}
\end{figure}

\fref{result4} illustrates the impact of the antenna moving speed on different schemes. As $v_\mathrm{pin}$ increases, the feasible antenna placement region in each round is enlarged, which improves the uplink data rates and reduces the uploading latency. As a result, more devices can be selected under the fixed deadline $T_\mathrm{max}$, leading to a lower sum AoI. The proposed coalition game with BnB based antenna placement consistently achieves the lowest sum AoI and selects the largest number of devices, demonstrating the effectiveness of the joint AoI based device selection and antenna placement design. Moreover, the performance gap between the BnB based placement and the midpoint based placement becomes more pronounced as $v_\mathrm{pin}$ increases. This is because both schemes are strongly constrained by the limited feasible movement interval at low antenna speeds, while a larger $v_\mathrm{pin}$ provides a wider search region, enabling the BnB based method to better exploit antenna mobility for reducing uploading latency.

\begin{figure}[!t]
\hspace{-6mm}\centering{\includegraphics[width=96mm]{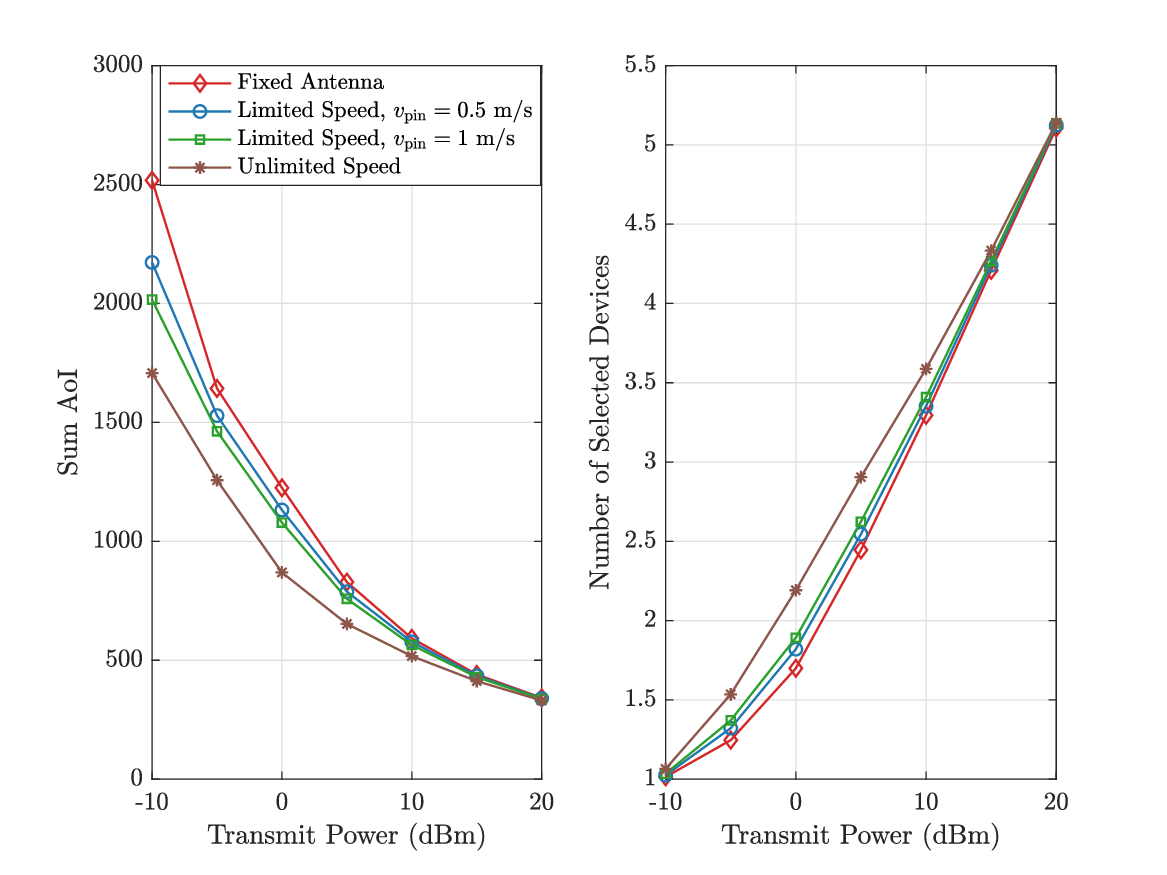}}
\caption{Impact of transmit power on system performance under different antenna moving speeds, where $C_n=1$~GHz, $D_\mathrm{lm}=100$~Mbits, $T_\mathrm{max}=12$~s, and $A_0=20$.}\vspace{-6mm}
\label{result5}
\end{figure}

\fref{result5} shows the impact of the transmit power $P_t$ on the system performance under different antenna moving speeds. As $P_t$ increases, the sum AoI decreases significantly, since a higher transmit power improves the uplink rate and reduces the model uploading time. As a result, more devices can be selected under the fixed deadline $T_\mathrm{max}$, which is consistent with the increasing number of selected devices. Overall, a larger antenna moving speed leads to a lower sum AoI and a larger selected device set, and the gain is more pronounced in the low-power region where the uplink transmission is more sensitive to channel improvement. For the number of selected devices, the curves almost overlap when $P_t$ is very low, because the uplink rate is severely limited and the mobility induced channel gain is insufficient to admit additional devices. In the medium-power region, e.g., from $-5$ to $10$ dBm, the system becomes more sensitive to antenna placement, and thus different antenna moving speeds lead to clear differences in both device participation and sum AoI. When $P_t$ is sufficiently high, the curves gradually converge again, indicating that uplink transmission is no longer the dominant bottleneck and the marginal benefit of antenna mobility becomes limited.

\begin{figure}[!t]
\hspace{-6mm}\centering{\includegraphics[width=96mm]{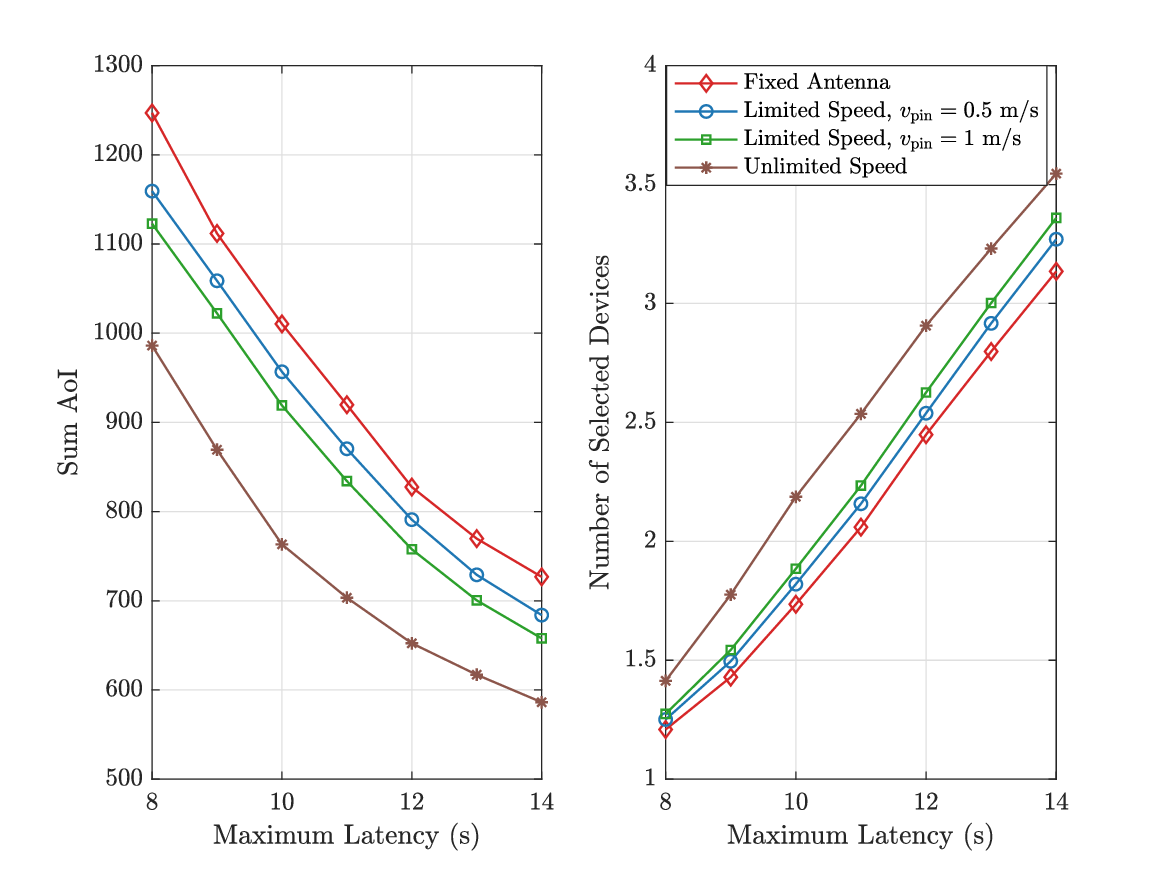}}
\caption{Impact of latency deadline on system performance under different antenna moving speeds, where $C_n=1$~GHz, $D_\mathrm{lm}=100$~Mbits, $P_t=5$~dBm, and $A_0=20$.}\vspace{-6mm}
\label{result6}
\end{figure}

\fref{result6} demonstrates the impact of the maximum latency deadline $T_\mathrm{max}$ on the system performance under different antenna moving speeds. As $T_\mathrm{max}$ increases, the sum AoI decreases, while the number of selected devices increases. This is because a more relaxed latency deadline allows more devices to complete local training and model uploading within each communication round, thereby improving device participation freshness. Moreover, a higher antenna moving speed consistently leads to a lower sum AoI and a larger selected device set. This confirms that antenna mobility can effectively improve the uplink channels and reduce uploading latency, enabling more devices to be selected under the same deadline. In particular, the unlimited speed case achieves the best performance, whereas the fixed antenna case performs the worst, indicating that the proposed AoI based device selection can achieve better system performance when supported by more flexible antenna placement.
\section{Conclusions}
In this paper, an AoI aware federated learning framework with a mobility constrained pinching antenna was investigated. By utilizing the local training period for antenna repositioning, the coupling among device selection, antenna mobility, uploading latency, and AoI evolution was characterized, and an overall AoI minimization problem was formulated under a round latency deadline. To solve this problem, a coalitional game based device selection algorithm and a BnB based antenna placement algorithm were developed. Simulation results demonstrated that the proposed scheme improves learning convergence, reduces the sum AoI, and supports more device participation compared with baseline schemes, confirming that pinching antennas can provide an effective spatial reconfiguration capability for enhancing wireless federated learning.
\section*{Appendix~A: Proof of Proposition~\ref{rategap}}
The non-saturated regime $v_\mathrm{pin}<v_{\mathrm{sat},n}$ is first considered. In this regime, the residual horizontal distance between the pinching antenna and the projection of device $n$ is $r_n(v_\mathrm{pin})$, while the corresponding horizontal distance under the fixed antenna benchmark is $|x_n|$. Therefore, the rate gain from the fixed antenna to the speed limited pinching antenna is
\begin{align}
\Delta R_{n,0\to v}&=B\log_2\!\left(\!1\!+\!\frac{a_n}{r_n^2(v_\mathrm{pin})\!+\!b_n}\!\right)\!-\!B\log_2\!\left(\!1\!+\!\frac{a_n}{x_n^2\!+\!b_n}\!\right)\nonumber\\
&=B\log_2\!\left(\!1\!+\!\frac{a_n\left(x_n^2-r_n^2(v_\mathrm{pin})\right)}{\left(r_n^2(v_\mathrm{pin})\!+\!b_n\right)\left(x_n^2\!+\!b_n\!+\!a_n\right)}\!\right).
\end{align}
Since $0\le r_n(v_\mathrm{pin})\le |x_n|$, it follows that $x_n^2-r_n^2(v_\mathrm{pin})\ge 0$. By applying $\ln(1+z)\ge \frac{z}{1+z}$ for $z\ge 0$, the following lower bound is obtained:
\begin{equation}
\Delta R_{n,0\to v}\ge\frac{B}{\ln 2}\frac{a_n\left(x_n^2-r_n^2(v_\mathrm{pin})\right)}{\left(x_n^2+b_n\right)\left(r_n^2(v_\mathrm{pin})+b_n+a_n\right)}.
\end{equation}
To characterize its monotonicity with respect to $v_\mathrm{pin}$, define
\begin{equation}
f(r)=\frac{x_n^2-r^2}{r^2+b_n+a_n},
\end{equation}
where $0\le r\le |x_n|$. The derivative of $f(r)$ is
\begin{equation}
\frac{d f(r)}{dr}=-\frac{2r\left(x_n^2+b_n+a_n\right)}{\left(r^2+b_n+a_n\right)^2}\le 0.
\end{equation}
Thus, $f(r)$ is non-increasing in $r$. Since $r_n(v_\mathrm{pin})$ is non-increasing in $v_\mathrm{pin}$, the lower bound of $\Delta R_{n,0\to v}$ is non-decreasing in $v_\mathrm{pin}$.

Next, the rate gap from the speed limited pinching antenna to the unlimited speed benchmark is given by
\begin{align}
\Delta R_{n,v\to\infty}&=B\log_2\!\left(\!1\!+\!\frac{a_n}{b_n}\!\right)\!-\!B\log_2\!\left(\!1\!+\!\frac{a_n}{r_n^2(v_\mathrm{pin})\!+\!b_n}\!\right) \nonumber\\
&=B\log_2\!\left(\!1\!+\!\frac{a_n r_n^2(v_\mathrm{pin})}{b_n\left(r_n^2(v_\mathrm{pin})\!+\!b_n\!+\!a_n\right)}\!\right).
\end{align}
By applying $\ln(1+z)\le z$ for $z\ge 0$, one has
\begin{equation}
\Delta R_{n,v\to\infty}\le\frac{B}{\ln 2}\frac{a_n r_n^2(v_\mathrm{pin})}{b_n\left(r_n^2(v_\mathrm{pin})+b_n+a_n\right)}.
\end{equation}
Define the speed dependent term in the above upper bound as
\begin{equation}
g(r)=\frac{r^2}{r^2+b_n+a_n}.
\end{equation}
For $r\ge 0$, its derivative is
\begin{equation}
\frac{d g(r)}{dr}=\frac{2r(b_n+a_n)}{\left(r^2+b_n+a_n\right)^2}\ge 0.
\end{equation}
Thus, $g(r)$ is non-decreasing in $r$. Since $r_n(v_\mathrm{pin})$ is non-increasing in $v_\mathrm{pin}$, the upper bound of $\Delta R_{n,v\to\infty}$ is non-increasing in $v_\mathrm{pin}$.

Finally, when $v_\mathrm{pin}\ge v_{\mathrm{sat},n}$, the pinching antenna reaches the projection of device $n$ during local training, which leads to $r_n(v_\mathrm{pin})=0$. Hence, the speed limited pinching antenna achieves the same rate as the unlimited speed benchmark, and $\Delta R_{n,v\to\infty}=0$. The proof is completed.\QEDA
\section*{Appendix~B: Proof of Proposition~\ref{feasibleregion}}
Based on \eqr{noruptime}, since $a_n>0$, $f_n(x_\mathrm{pin}^{(t)})$ is monotonically increasing with respect to $(x_\mathrm{pin}^{(t)}-x_n)^2$. Consider any feasible antenna position satisfying $x_\mathrm{pin}^{(t)}<x_\mathrm{min}^{(t)}$. For any selected device $n\in\mathcal{S}_t$, it follows that
\begin{equation}
|x_\mathrm{pin}^{(t)}-x_n|>|x_\mathrm{min}^{(t)}-x_n|,
\end{equation}
which implies
\begin{equation}
f_n(x_\mathrm{pin}^{(t)})>f_n(x_\mathrm{min}^{(t)}).
\end{equation}
Therefore, the following inequality can be obtained:
\begin{equation}
\sum_{n\in\mathcal{S}_t} f_n(x_\mathrm{pin}^{(t)})>\sum_{n\in\mathcal{S}_t} f_n(x_\mathrm{min}^{(t)}),
\end{equation}
which indicates that no feasible antenna position satisfying $x_\mathrm{pin}^{(t)}<x_\mathrm{min}^{(t)}$ can be globally optimal.

Similarly, for any feasible antenna position satisfying $x_\mathrm{pin}^{(t)}>x_\mathrm{max}^{(t)}$, the following inequality holds for all $n\in\mathcal{S}_t$:
\begin{equation}
f_n(x_\mathrm{pin}^{(t)})>f_n(x_\mathrm{max}^{(t)}),
\end{equation}
and hence
\begin{equation}
\sum_{n\in\mathcal{S}_t} f_n(x_\mathrm{pin}^{(t)})>\sum_{n\in\mathcal{S}_t} f_n(x_\mathrm{max}^{(t)}),
\end{equation}
which leads to the conclusion that no feasible position satisfying $x_\mathrm{pin}^{(t)}>x_\mathrm{max}^{(t)}$ can be globally optimal.

Therefore, the global optimum must lie in the intersection between the mobility constrained feasible interval $[l^{(t)},u^{(t)}]$ and the span of the selected devices $[x_\mathrm{min}^{(t)},x_\mathrm{max}^{(t)}]$. When $\underline{x}^{(t)}\le\overline{x}^{(t)}$, this intersection is exactly $[\underline{x}^{(t)},\overline{x}^{(t)}]$, which completes the proof.\QEDA
\bibliographystyle{IEEEtran}
\bibliography{KaidisBib}
\end{document}